\documentclass[11pt]{article}
\usepackage{listings}
\usepackage{cancel}
\usepackage{bbm}
\usepackage{graphicx}
\usepackage{subcaption}
\usepackage{mathtools}
\usepackage{amsmath}
\usepackage{amsfonts}
\usepackage{enumerate}
\usepackage{xr}
\usepackage{varioref}
\usepackage{xr-hyper}
\usepackage{hyperref}
\usepackage{multirow}

\usepackage{amsthm} 
\usepackage[toc,page]{appendix}
\usepackage[font=small,labelfont=bf, skip=0pt]{caption}

\usepackage{array}
\newcolumntype{H}{>{\setbox0=\hbox\bgroup}c<{\egroup}@{}}

\usepackage[a4paper,margin=1in,footskip=0.25in]{geometry}

\usepackage[utf8]{inputenc}
\usepackage[style=authoryear-comp,firstinits=true,citestyle=authoryear,natbib=true,backend=bibtex,doi=false,isbn=false,url=false,maxcitenames=2,eprint=false]{biblatex}
\AtBeginBibliography{}

\DeclareNameAlias{sortname}{last-first}

\setlength{\bibitemsep}{\baselineskip}
\renewbibmacro{in:}{}
\renewbibmacro*{volume+number+eid}{%
	\printfield{volume}%
	\setunit*{\addnbspace}
	\printfield{number}%
	\setunit{\addcomma\space}%
	\printfield{eid}}
\DeclareFieldFormat[article]{number}{\mkbibparens{#1}}

\bibliography{biblio}

\usepackage{nameref, zref-xr,zref-hyperref,zref-user}

\makeatletter
\def\thmhead@plain#1#2#3{%
	\thm@notefont{}
	\thmname{#1}\thmnumber{\@ifnotempty{#1}{ }\@upn{#2}}%
	\thmnote{ {\the\thm@notefont#3}}}
\let\thmhead\thmhead@plain
\itshape 
\makeatother

\newtheorem*{definition*}{Definition}
\newtheorem{definition}{Definition}
\newtheorem*{assumption*}{Assumption}
\newtheorem*{lemma*}{Lemma}
\newtheorem{lemma}{Lemma}
\newtheorem*{sublemma*}{Sublemma}

\newtheorem*{proposition*}{Proposition}
\newtheorem{proposition}{Proposition}
\newtheorem*{conjecture*}{Conjecture}
\newtheorem*{theorem*}{Theorem}
\newtheorem{theorem}{Theorem}
\newtheorem*{corollary*}{Corollary}

\newcommand{\indep}{\perp\!\!\!\!\perp}

\usepackage{lipsum}
\interfootnotelinepenalty=10000

\usepackage{titlesec}
\titleformat*{\subsubsection}{\large\bfseries}

\setcounter{MaxMatrixCols}{20}

\graphicspath{ {images/} }

\usepackage[many]{tcolorbox}
\newsavebox{\fmbox}

\usepackage[doublespacing]{setspace}

\title{A Vector Monotonicity Assumption for Multiple Instruments}
\author{Leonard Goff\thanks{\protect\linespread{1}\protect\selectfont Department of Economics, University of Calgary. 2500 University Dr. N.W. Calgary, AB T2N1N4, Canada. Email: \texttt{leonard.goff@ucalgary.ca}. I am grateful to Josh Angrist, Simon Lee, Suresh Naidu and Bernard Salani\'e for insightful feedback throughout this project, which was a chapter of my PhD dissertation. I also thank Isaiah Andrews, Jushan Bai, Junlong Feng, Peter Hull, Jack Light, Jos\'e Luis Montiel Olea, Serena Ng, Vitor Possebom and Alex Torgovitsky for helpful comments and discussion. I thank attendees of the Columbia econometrics colloquium, the 2019 Young Economists Symposium, the 2019 Empirics and Methods in Economics Conference, the 2022 Georgia Econometrics Workshop, and the 2022 North American Meeting of the Econometric Society for their feedback. I thank the Bureau of Labor Statistics for data access. Any errors are my own. Online Appendices are available \href{http://www.leonardgoff.com/resources/vm_externalappendix.pdf}{\nolinkurl{here}}.}}


\date{}


\begin{document}

\maketitle

\begin{abstract}
When a researcher combines multiple instrumental variables for a single binary treatment, the monotonicity assumption of the local average treatment effects (LATE) framework can become restrictive: it requires that all units share a common direction of response even when separate instruments are shifted in opposing directions. What I call \textit{vector monotonicity}, by contrast, simply assumes treatment uptake to be monotonic in all instruments. I characterize the class of causal parameters that are point identified under vector monotonicity, when the instruments are binary. This class includes, for example, the average treatment effect among units that are in any way responsive to the collection of instruments, or those that are responsive to a given subset of them. The identification results are constructive and yield a simple estimator for the identified treatment effect parameters. An empirical application revisits the labor market returns to college.
\end{abstract}

\newpage

\normalsize

\section{Introduction}
The local average treatment effects (LATE) framework of \citet{Imbens2018} allows instrumental variables to be used for causal inference even when there is arbitrary heterogeneity in treatment effects. However, the model makes an important assumption about homogeneity in individuals' selection behavior, referred to as \textit{monotonicity}. When the researcher has a single instrumental variable at their disposal, this LATE monotonicity assumption is typically quite a natural one to make. But when multiple instruments are combined, LATE monotonicity can become hard to justify---a point recently emphasized by \citet*{Mogstad2020a} (henceforth MTW). 

This paper considers a natural alternative assumption, which is that monotonicity holds on an instrument-by-instrument basis: what I call \textit{vector monotonicity} (VM).\footnote{A leading special case of VM is discussed by MTW under the name \textit{actual monotonicity} (see Section \ref{sec:notions}).} Vector monotonicity assumes that each instrument has an impact on treatment uptake in a direction that is common across units, regardless of the values of the other instruments. This direction need not be known ex-ante by the researcher, but is often implied by economic theory. For example, two instruments for college enrollment might be: i) proximity to a college; and ii) affordability of nearby colleges. VM assumes that increasing either instrument induces some individuals towards going to college, while discouraging none, i.e. proximity to a college weakly encourages college attendance regardless of price, and lower tuition weakly encourages college attendance regardless of distance. This contrasts with the traditional monotonicity assumption of the LATE model, which requires that either proximity or affordability dominates in the selection behavior of all individuals: in particular, it implies that all individuals who would go to college if it were far but cheap would also go if it were close but expensive, or that the reverse is true.

I provide a simple approach to estimating causal effects under VM. In a setting with any number of binary instruments satisfying VM, I show that average treatment effects can be point identified for subgroups of the population if and only if that subgroup satisfies a certain condition.\footnote{In Appendix \ref{alternative}, I show how discrete instruments more generally can be accommodated by re-expressing them as a larger number of binary instruments, while preserving vector monotonicity and without loss of information.} The condition is met by, for example, the group of all units (e.g. individuals) that move into treatment when any fixed subset of the instruments are switched ``on''. As special cases, this includes for example the set of units that would respond to changing a single particular instrument, or those units for whom treatment status would vary in any way given changes to the available instruments. I propose a two-step estimator for this family of identified causal parameters.\footnote{This estimator is implemented in the companion Stata package \texttt{ivcombine}, available from \href{https://github.com/leonardgoff/ivcombine}{https://github.com/leonardgoff/ivcombine}.} Notably, the estimator has the same computational cost as the popular two-stage least squares (2SLS) estimator, despite the rapid proliferation of potential selection patterns compatible with VM as one increases the number of instruments.

Vector monotonicity represents a special case of what MTW refer to as \textit{partial monotonicity} (PM). VM and PM are very similar, but PM is ex-ante weaker: it allows the ``direction'' in which treatment uptake increases for each instrument to depend on the values of the other instruments. However given PM and the standard instrumental variables (IV) independence assumption, the additional restriction made by VM is testable. In particular, VM implies that the propensity score function is component-wise monotonic in the instruments. VM and PM thus coincide when this testable restriction is satisfied, and VM can be thought of as an application of PM within a class of settings that can be distinguished empirically. Further, VM is also often implied by natural choice-theoretic considerations, making monotonicity of the propensity score reasonable to expect provided that the instruments are valid.

In their paper, MTW focus on the causal interpretation of the 2SLS estimand under PM, and show that 2SLS is not guaranteed to recover a convex combination of heterogeneous treatment effects under PM or VM.\footnote{For example, Proposition 5 of MTW demonstrates this in the case of two binary instruments satisfying VM.} This motivates the question of what identifying power remains for instrumental variables satisfying PM or VM to uncover causal effects. In a second paper (\citealt*{Mogstad2020b}, henceforth MTW2), these same authors discuss identification more generally under partial monotonicity. MTW2 adapt the marginal treatment effects (MTE) framework of \citet{Heckman2005} for use under PM, and construct identified sets for a broad class of causal parameters that are typically only partially identified by IV methods (absent parametric assumptions and/or continuous instruments).

By contrast, my results maintain VM and characterize the class of causal parameters that are point identified without any auxiliary assumptions and even with discrete instruments. I show in Appendix \ref{sec:compare} that when VM does hold, the class of treatment effect parameters identified by my approach coincides with those of the same form that would be point identified under the approach of MTW2, if the method of MTW2 is applied using all identifying moments provided by the data but without additional maintained assumptions (e.g. parametric forms for MTEs).\footnote{My results thus also confirm a conjecture of MTW2---that their approach leads to identified sets that are sharp---in the setting I consider and when point identification holds.} In view of this, a desirable feature of my approach is that it is able to guarantee upfront to the researcher that their chosen target parameter is identified, and give a menu of such parameters that one could estimate. By leveraging constructive estimands for the target parameter, my results also lead to an easy-to-implement estimator and associated confidence intervals.

The estimator I propose in this paper can thus be seen as an alternative to the method of MTW2, but also to 2SLS, which is the most popular method to make use of multiple instruments in applied work. MTW derive additional testable conditions which are sufficient for the 2SLS estimand to  deliver positive weights under PM, but the number of conditions to be verified grows combinatorially with the number of instruments. Targeting a particular treatment effect parameter that is identified under VM avoids the need for such tests. My estimator couples this advantage with the computational ease of a simple ``2SLS-like'' estimator.

In Section \ref{setup}, I review the basic IV setup with a binary treatment, and compare VM to the traditional LATE monotonicity assumption and PM. In Section \ref{comono} I set the stage for the identification analysis by showing how with any number of binary instruments VM partitions the population into well-defined ``response groups'', nesting results from MTW for the two-instrument case. I then use this taxonomy of response groups in Section \ref{idsec} to characterize the family of identified parameters under VM with binary instruments, which leads to the estimator proposed in Section \ref{est}. Section \ref{empiricalapp} applies my method to study the labor market returns to college.

\section{Setup} \label{setup}

Suppose the researcher has a scalar outcome variable $Y$, a binary treatment variable $D$, and a vector $Z = (Z_1,Z_2, \dots ,Z_J)'$ of $J$ instrumental variables that can take values in $\mathcal{Z} \subseteq (\mathcal{Z}_1 \times \mathcal{Z}_2 \times \dots \times \mathcal{Z}_J)$, where $\mathcal{Z}_j$ denotes the set of values that instrument $Z_j$ can take.  A typical value in $\mathcal{Z}$ will be denoted with the boldface notation $\mathbf{z}$, with $z_j$ denoting the component corresponding to the $j^{th}$ instrument. I employ the standard definitions of potential outcomes and potential treatments, letting $D_i(\mathbf{z})$ denote the counterfactual treatment status of observational unit $i$ (e.g. an individual) when the vector of instruments takes value $\mathbf{z}$, and $Y_i(d,\mathbf{z})$ the outcome that would occur with treatment $d \in \{0,1\}$ and value $\mathbf{z}$ for the instruments. Let $Z_i=(Z_{1i},\dots, Z_{Ji})'$ denote unit $i$'s realized value of all $J$ instruments. 

The following assumption states that the $J$ available instrumental variables are \textit{valid}:
\begin{assumption*}[1 (exclusion \& independence)] $Y_i(d,\mathbf{z}) = Y_i(d)$ for all $\mathbf{z} \in \mathcal{Z}, d \in \{0,1\}$; and $(Y_i(1), Y_i(0), \{D_i(\mathbf{z})\}_{\mathbf{z} \in \mathcal{Z}}) \indep Z_i$.
\end{assumption*}
\noindent The first part of Assumption 1 states that the instruments satisfy the exclusion restriction that potential outcomes do not depend on instrument values once treatment status is fixed. The second part of Assumption 1 states that the instruments are statistically independent of potential outcomes and potential treatments.\footnote{It's worth noting that whether or not to use multiple instruments may not be ``optional'', in the sense that if a collection of instruments are valid, this does not imply that a subset of the instruments are as well.} In practice, it is common to maintain a version of this independence assumption that holds conditional on a set of observed covariates. I implicitly condition on any such covariates and discuss incorporating them in estimation in Appendix \ref{withcovariates}. 

\subsection{Notions of monotonicity} \label{sec:notions}

It is well-known that when treatment effects are heterogeneous, Assumption 1 alone is not sufficient for instrument variation to identify treatment effects. The seminal LATE model of \citet{Imbens2018} introduces the additional assumption of \textit{monotonicity}:
\begin{assumption*}[IAM (traditional LATE monotonicity)] For all $\mathbf{z}, \mathbf{z'} \in \mathcal{Z}$: $D_i(\mathbf{z}) \ge D_i(\mathbf{z}')$ for all $i$ or $D_i(\mathbf{z}) \le D_i(\mathbf{z}')$ for all $i$.
\end{assumption*}
\noindent I have referred to IAM as ``LATE monotonicity'' in the introduction, but for the remainder of the paper I follow MTW and call it IAM for short (for ``Imbens and Angrist monotonicity'').

To appreciate the sense in which IAM can be strong when $\mathbf{z}$ is a vector, let us code the two instruments for college from the introduction as binary variables (``far''/``close'' and ``cheap''/``expensive''). As emphasized by MTW, IAM says that a given counterfactual change to the proximity and/or tuition instruments can either move some students into college attendance, or some students out, but not both. In particular, this requires that all units who would go to college when it is far but cheap would also go to college if it was close and expensive, or the other way around. This implication will generally fail to hold if individuals differ in how much each of the instruments ``matters'' to them: for example, if some students are primarily sensitive to distance while others are primarily sensitive to tuition.\footnote{MTW also show that with continuous instruments, IAM implies the very strong restriction that marginal rates of substitution are identical among individuals indifferent between treatment and non-treatment.} 

Vector monotonicity instead captures monotonicity as the notion that increasing the value of any \textit{one} instrument weakly encourages (or discourages) all units to take treatment, regardless of the values of the other instruments.
\begin{assumption*}[2 (vector monotonicity)]
	There exists an ordering $\ge_j$ on $\mathcal{Z}_j$ for each $j \in \{1\dots J\}$ such that for all $\mathbf{z},\mathbf{z'} \in \mathcal{Z}$, if $\mathbf{z} \ge \mathbf{z}'$ component-wise according to the orderings $\{\ge_j\}_{j=1}^J$, then $D_i(\mathbf{z}) \ge D_i(\mathbf{z}')$ for all $i$.
\end{assumption*}
\noindent When each $\ge_j$ is the standard ordering on real numbers, MTE call VM ``actual monotonicity'', or AM.\footnote{\citet{Mountjoy2018} imposes a version of VM in a setting with continuous instruments and a ternary treatment.} I instead use the term ``VM'' to emphasize that $\ge_j$ need not be this order for identification results to hold, but I will typically restrict to AM (which represents a simple relabeling of the instrument values) for ease of exposition.

Assumption IAM implies the existence of a (total) order on $\mathcal{Z}$, where if $\mathbf{z} \ge \mathbf{z}'$ with respect to that order then $D_i(\mathbf{z}) \ge D_i(\mathbf{z'})$ for all $i$.\footnote{This follows since if $D_i(\mathbf{z}) \ge D_i(\mathbf{z}')$ and $D_i(\mathbf{z}') \ge D_i(\mathbf{z}'')$, then $D_i(\mathbf{z}) \ge D_i(\mathbf{z}'')$. Any two points in $\mathcal{Z}$ can be ranked in this way, yielding a weak total ordering on $\mathcal{Z}$.} In the returns-to-schooling example, this order might be the following, where an arrow from $\mathbf{z}'$ to $\mathbf{z}$ indicates that $D_i(\mathbf{z}) \ge D_i(\mathbf{z}')$ for all $i$: \vspace{.5cm} 
\tikzstyle{line} = [draw]
\begin{center}
	\begin{tikzpicture}
		\node (F1) at (1,0.5) {$(expensive, far)$};
		\node (F2) at (4.85,0.5) {$(cheap, far)$};
		\node (F3) at (9,0.5) {$(expensive, close)$};
		\node (F4) at (13,0.5) {$(cheap, close)$};
		
		\draw[->] (F1) -- (F2);
		\draw[->] (F2) -- (F3);
		\draw[->] (F3) -- (F4);
	\end{tikzpicture}
\end{center}
An alternative ordering to the one depicted above would be that instead $D_i(expensive, far) \le D_i(expensive, close) \le D_i(cheap, far) \le D_i(cheap, close)$. While either of these two orders may seem equally plausible ex-ante, Assumption IAM requires that only one or the other holds, common to all $i$ in the population.

By contrast, VM ascribes a \textit{partial} order on $\mathcal{Z}$---only some pairs $(\mathbf{z},\mathbf{z}')$ are ranked. In the returns to schooling example, the obvious partial order under VM is:
\begin{center}
	\begin{tikzpicture}
		
		\node (F1) at (1.5,-1.5) {$(expensive, far)$};
		\node (F2) at (6,-.5) {$(cheap, far)$};
		\node (F3) at (6,-2.5) {$(expensive, close)$};
		\node (F4) at (10.5,-1.5) {$(cheap, close)$};
		
		\draw[->] (F1) -- (F2);
		\draw[->] (F2) -- (F4);
		\draw[->] (F1) -- (F3);
		\draw[->] (F3) -- (F4);
	\end{tikzpicture}
\end{center}
The absence of vertical arrows between $(cheap, far)$ and $(expensive, close)$ above means that under VM, it could be the case that $D_i(cheap, far) > D_i(expensive, close)$ for some $i$, while $D_i(cheap, far) < D_i(expensive, close)$ for some other $i$.

The partial monotonicity assumption (PM) introduced by MTW is weaker than both IAM and VM. Like VM, it implies a partial order on $\mathcal{Z}$. Let $(z_j, \mathbf{z}_{-j})$ denote a given value in $\mathcal{Z}$ as the combination of a value $z_j \in \mathcal{Z}_j$ for the $j^{th}$ instrument and value $\mathbf{z}_{-j} \in \mathcal{Z}_{-j}$ for the other instruments, where $\mathcal{Z}_{-j}$ denotes the set of possible values for all instruments aside from $Z_j$.
\begin{assumption*}[\textbf{PM} (partial monotonicity)]
	Consider any $j\in\{1\dots J\}$, $z_j,z_j' \in \mathcal{Z}_{j}$, and  $\mathbf{z}_{-j} \in \mathcal{Z}_{-j}$ such that $(z_j, \mathbf{z}_{-j}) \in \mathcal{Z}$ and $(z'_j, \mathbf{z}_{-j}) \in \mathcal{Z}$. Then either $D_i(z_j,\mathbf{z}_{-j}) \ge D_i(z_j',\mathbf{z}_{-j})$ for all $i$ or $D_i(z_j,\mathbf{z}_{-j}) \le D_i(z_j',\mathbf{z}_{-j})$ for all $i$.
\end{assumption*}
\noindent Under PM, there exists for any instrument $j$ an ordering on the points $z \in \mathcal{Z}_j$ such that $D_i(z,\mathbf{z}_{-j})$ is weakly increasing along the order, \textit{for that fixed choice of} $\mathbf{z}_{-j}$. The key (and only) additional restriction made by VM beyond PM is that under VM, this ordering must be the same across all values of $\mathbf{z}_{-j}$ for a given $j$. For example, close proximity to a college encourages going to college, whether or not nearby colleges are cheap. By contrast, PM could capture a situation in which college proximity encourages attendance when nearby colleges are cheap, but discourages attendance when they are expensive.

\begin{figure}[h!] 
	\begin{center}
		\resizebox{.9\textwidth}{!}{
			\begin{tikzpicture}[clip=false]			
				\fill[gray, draw=black, thick, opacity=.4]   (150:0) circle (3.5);
				\node at ( 90:3.75)    {\textbf{PM}};				
				\fill[green, draw=black, thick, opacity=.4] (-1, -.5) circle (2);
				\node at (-1.1, 1.75)   {\textbf{VM}};
				\fill[blue, draw=black, thick, opacity=.4]  (1, -.5) circle (2);
				\node at (1,1.75)   {\textbf{IAM}};
				
				\begin{scope}[shift={(10,-.5)}]
					\pgfset{minimum size=4cm,inner sep=2mm}
					\color{green}
					\pgfsetstrokecolor{black}
					\pgfsetstrokeopacity{1}
					\pgfsetfillopacity{0.4}
					\pgfsetlinewidth{.25}
					\pgfnode{circle}{center}{}{nodename}{\pgfusepath{stroke,clip,fill}}
					
					\fill[blue, draw=black, thick, opacity=.4]  (2, 0) circle (2);
					
					\pgfsetfillopacity{0.2}
					\color{gray}
					\pgfnode{circle}{center}{}{nodename}{\pgfusepath{stroke,clip,fill}}				
				\end{scope}
				
				\node at (10, 1.85)   {\textbf{VM=PM}};
				\node at (11, -0.5)   {\textbf{IAM}};
				
				\node at (0, -4) {\textbf{With no restriction on propensity score}};
				\node at (10, -4) {\textbf{When propensity score is monotonic}};
				
			\end{tikzpicture}
		}
	\end{center}
	\caption{The left panel shows an ex-ante comparison of Imbens \& Angrist monotonicity (IAM), vector monotonicity (VM), and partial monotonicity (PM), if the propensity score function is unknown. The right panel depicts the relationship when the propensity score is component-wise monotonic: PM and VM become identical, with IAM a special case.} \label{venn}
\end{figure}

While VM is ex-ante stronger than PM, the additional restriction made by VM over PM is empirically testable, by inspecting the propensity score function $\mathcal{P}(\mathbf{z}):=\mathbbm{E}[D_i|Z_i=\mathbf{z}]$. Call $\mathcal{Z}$ \textit{non-disjoint} when for any two $\mathbf{z},\mathbf{z'} \in \mathcal{Z}$ there exists a sequence of vectors $\mathbf{z}_1, \dots, \mathbf{z}_M \in \mathcal{Z}$ where each $\mathbf{z}_m$ and $\mathbf{z}_{m-1}$ differ on only one component, and $\mathbf{z}_1=\mathbf{z}$, $\mathbf{z}_M=\mathbf{z}'$.\footnote{This property rules out atypical cases such as $\mathcal{Z}$ consisting only of the points $(0,0)$ and $(1,1)$ e.g. if $J=2$.}

\begin{proposition} \label{VMtest}
	Suppose PM and Assumption 1 hold, and $\mathcal{Z}$ is non-disjoint. Then VM holds if and only if $\mathcal{P}(\mathbf{z})$ is component-wise monotonic in $\mathbf{z}$, for some fixed ordering $\succeq_j$ on each $\mathcal{Z}_j$.
\end{proposition}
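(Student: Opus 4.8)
The plan is to prove the two implications separately, with the reverse direction carrying the real content. The forward direction (VM implies a component-wise monotone propensity score) is immediate once one notes that the independence part of Assumption 1 lets us identify $E[D_i \mid Z_i = z]$ with the average potential treatment $E[D_i(z)]$: if $z \ge z'$ component-wise for the orderings witnessing VM, then $D_i(z) \ge D_i(z')$ almost surely, hence $E[D_i(z)] \ge E[D_i(z')]$, and so $E[D_i \mid Z_i = z] \ge E[D_i \mid Z_i = z']$. Taking $\succeq_j$ to be exactly the VM orderings $\ge_j$ finishes this direction; neither PM nor connectedness is needed here.

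For the reverse direction I would argue through Proposition \ref{equiv}: since $\mathcal{Z}$ is connected, it suffices to exhibit, for each $j$, an ordering $\ge_j$ on $\mathcal{Z}_j$ such that $z_j \ge_j z_j'$ forces $D_i(z_j, z_{-j}) \ge D_i(z_j', z_{-j})$ with probability one for every $z_{-j}$ with $(z_j,z_{-j}),(z_j',z_{-j}) \in \mathcal{Z}$. The natural candidate is $\ge_j \, := \, \succeq_j$, the ordering making the propensity score $p(z) := E[D_i \mid Z_i = z]$ component-wise monotone. Fix $j$, an admissible $z_{-j}$, and a pair $z_j \succeq_j z_j'$, and set $A := D_i(z_j, z_{-j})$, $B := D_i(z_j', z_{-j})$, both $\{0,1\}$-valued. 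Monotonicity of $p$ (again read as a statement about $E[D_i(\cdot)]$ via Assumption 1) gives $E[A] \ge E[B]$, while PM gives the dichotomy ``$A \ge B$ a.s.\ or $A \le B$ a.s.'' When $E[A] > E[B]$, the second branch is impossible, so $A \ge B$ almost surely, which is what we want.

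The one subtlety — and the step I expect to be the crux — is the tie case $E[A] = E[B]$, where PM by itself does not select an orientation. Here I would observe that if either branch of the PM dichotomy holds while the means coincide, then $A - B$ is an almost surely single-signed, integrable random variable with zero expectation, hence $A = B$ almost surely; in particular $A \ge B$ a.s., so the orientation can harmlessly be taken to agree with $\succeq_j$. Collecting the two cases, $z_j \succeq_j z_j'$ implies $D_i(z_j, z_{-j}) \ge D_i(z_j', z_{-j})$ a.s.\ for all admissible $z_{-j}$, so Proposition \ref{equiv} delivers VM. The remainder is bookkeeping: checking that ``component-wise monotonic in $z$'' unpacks to the one-coordinate-at-a-time comparisons used above, and that every relevant pair $z_j, z_j'$ is comparable under $\succeq_j$ (implicit in the statement).
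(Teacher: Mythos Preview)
Your proposal is correct and follows essentially the same route as the paper's own proof: both directions use Assumption 1 to identify $E[D_i\mid Z_i=z]$ with $E[D_i(z)]$, and the reverse direction combines the PM dichotomy with the propensity-score inequality and then appeals to Proposition \ref{equiv}. If anything, your treatment is slightly more careful than the paper's, since you explicitly dispose of the tie case $E[A]=E[B]$ (forcing $A=B$ a.s.\ under PM), whereas the paper's argument passes over it silently.
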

\begin{proof}
	Proofs for all results can be found in Appendix \ref{proofs} or the Online Appendix.
\end{proof}
\noindent Unlike VM, PM (like IAM) is compatible with any propensity score function $\mathcal{P}(\mathbf{z})$. Since IAM implies PM, it also follows from Proposition \ref{VMtest} that if IAM and Assumption 1 hold and $\mathcal{P}(\mathbf{z})$ is component-wise monotonic in $\mathbf{z}$, then VM holds. Thus if a researcher has verified that the propensity score function is monotonic,\footnote{With two binary instruments for example, one could test the four inequalities $\mathcal{P}(1,1) \ge \mathcal{P}(1,0)$, $\mathcal{P}(1,1) \ge \mathcal{P}(0,1)$, $\mathcal{P}(1,0) \ge \mathcal{P}(0,0)$, and $\mathcal{P}(0,1) \ge \mathcal{P}(0,0)$. This can be accomplished through a regression $D_i = \beta_0 + \beta_1 Z_{1i}+\beta_2 Z_{2i}+\beta_3 Z_{1i}Z_{2i}+\epsilon_i$ and testing that $\beta_1, \beta_2, \beta_3+\beta_1$ and $\beta_3 + \beta_2$ are all positive.} VM becomes a strictly weaker assumption than IAM. The overall relationship between Assumptions IAM, VM and PM is depicted in Figure \ref{venn}.\\

\noindent \textit{Remark 1:} Note that if Assumption 1 holds conditional on covariates $X_i$, Proposition \ref{VMtest} also need only hold with respect to the \textit{conditional} propensity score $\mathbbm{E}[D_i|Z_i=\mathbf{z},X_i=x]$ (see Section \ref{empiricalapp}). If VM is maintained, this property could in principle be used to test Assumption 1 conditional on a given set of covariates $X$.\\

\noindent \textit{Remark 2:} Another sufficient condition for VM given PM is the existence of individuals for each instrument that are responsive only to the value of that instrument. For example, suppose Alice only cares about proximity (going to college if and only if it is close), and Bob only cares about tuition (going to college if and only if it is cheap). If Alice and Bob are both present in the population, PM then requires that all other units in the population exhibit (weakly) the same directions of response to both instruments that Alice and Bob do, implying VM. The existence of both Alice and Bob in the population would also imply that IAM does not hold.

\section{Response groups under vector monotonicity} \label{comono}
To set the stage for analysis of identification under VM, I in this section show that VM partitions the population of interest into a set of groups that generalize the familiar taxonomy of ``always-takers'', ``never-takers'', and ``compliers'' from \citet{Imbens2018}, and also nests a taxonomy of six groups introduced by MTW for the case of two binary instruments.

To simplify notation, let $G_i$ represent an individual's entire vector of counterfactual treatments $\{D_i(\mathbf{z})\}_{\mathbf{z} \in \mathcal{Z}}$. For example, with a single binary instrument $G_i=\textit{always-taker}$ indicates that $D_i(0)=D_i(1)=1$. I refer to $G_i$ as unit $i$'s ``response group''.\footnote{This language follows \citet{Lee2020}. \citet{Heckman2018} use \textit{response-types} or \textit{strata}.} Response groups partition individuals in the population based on their selection behavior over all counterfactual values of the instruments. VM can be thought of as a restriction on the support $\mathcal{G}$ of $G_i$, limiting the types of response groups that can coexist in the population. 

While this section describes the structure of $\mathcal{G}$ under VM when the instruments are each binary,  Appendix \ref{alternative} shows how one can re-code a set of discrete but non-binary instruments as a larger number of binary instruments, while preserving VM. Also without loss of generality, let us let the value labeled ``1'' for each binary instrument be the direction in which potential treatments are increasing. These ``up'' values might be predicted ex-ante, but by Proposition \ref{VMtest} they are also empirically identified from the propensity score function.

\subsection{One or two binary instruments} \label{twoinstsec}
With one binary instrument, VM and IAM coincide. $\mathcal{G}$ then contains the three groups (see e.g. \citealt{Angrist2008}): ``compliers'' (for whom $D_i(1)>D_i(0)$), ``always-takers'' (for whom $D_i(1)=D_i(0)=1$) and ``never-takers'' (for whom $D_i(1)=D_i(0)=0$).

In the case of two binary instruments satisfying VM, MTW show that $\mathcal{G}$ contains six distinct response groups, enumerated in Table \ref{twobinarynames} below. In the returns to college setting, a ``$Z_1$ complier'', for example, would go to college if and only if college is cheap, regardless of whether it is close (like Bob).  A $Z_2$ complier, by contrast, would go to college if and only if college is close, regardless of whether it is cheap (like Alice). A reluctant complier requires college to \textit{both} be cheap and close to attend, while an eager complier goes to college so long as it is either cheap or close. Never and always takers are defined in the same way as they are under IAM: by $\max_{\mathbf{z} \in \mathcal{Z}} D_i(\mathbf{z})=0$ or $\min_{\mathbf{z} \in \mathcal{Z}} D_i(\mathbf{z})=1$.

\begin{table}[htb!]
	\centering
	\begin{tabular}{c|c|c|c|c|c|}
		\textbf{Name  }     & $\mathbf{z}=(0,0)'$ & $\mathbf{z}=(0,1)'$ & $\mathbf{z}=(1,0)'$ & $\mathbf{z}=(1,1)'$&$F$\\ \hline		never takers        & 0&0&0&0&n/a                                               \\
		always takers       & 1&1&1&1&$\emptyset$                                               \\
		$Z_1$ compliers     & 0&0&1&1&$\{1\}$                                               \\
		$Z_2$ compliers     & 0&1&0&1&$\{2\}$                                               \\
		reluctant compliers & 0&0&0&1&$\{1,2\}$                                               \\
		eager compliers     & 0&1&1&1&$\{1\},\{2\}$                        \\
	\end{tabular}\\ \vspace{.5cm}
	\caption{The six response groups under VM with two binary instruments, with names given by MTW. The first four columns of the table give $D_i(\mathbf{z})$, while the fifth yields an alternative representation of the response group, described in Section \ref{manybinary}.} \label{twobinarynames}
\end{table}

\subsection{Multiple binary instruments} \label{manybinary}
Now consider any number $J$ of binary instruments. For simplicity, let $\mathcal{Z} = \{0,1\}^J$, where $\{0,1\}^J = \{(z_1, z_2, \dots z_J): z_j \in \{0,1\}\}$ denotes the $J-$times Cartesian product of $\{0,1\}$.\footnote{If $\mathcal{Z}$ is a strict subset of $\{0,1\}^J$, the response groups can be defined from the restrictions to $\mathcal{Z}$ of the groups defined here. Appendix \ref{alternative} generalizes identification results to such cases, which arise with discrete instruments.} There are ex-ante $2^{|\mathcal{Z}|}=2^{2^{J}}$ distinct possible mappings between vectors of instrument values and treatment status, and we wish to characterize the subset of these that satisfy VM. The number of such response groups $G_i$ is the number of isotone boolean functions on $J$ variables, which is known to follow the so-called Dedekind sequence:\footnote{An analytical expression for the $\textrm{Ded}_J$ is given by \citet{A.Kisielewicz1988}, but only the first eight have been calculated numerically. While the Dedekind sequence explodes quite rapidly, it does so much more slowly than $2^{2^{J}}$ does. For example while $3/4=75\%$ of conceivable response groups for $J=1$ satisfy VM, only $20/256 \approx 7.8\%$ do for $J=3$, and just $ 7581/4294967296 \approx 1.7 *10^{-4}$ do for $J=5$. Thus the ``bite'' of VM is increasing with $J$, ruling out an increasing fraction of conceivable selection patterns.} $3, 6, 20, 168, 7581, 7828354, \dots $ \citep{A.Kisielewicz1988}. Letting $\mathrm{Ded}_J$ denote the $J^{th}$ value in this sequence, there are e.g. $\mathrm{Ded}_3=20$ response groups when there are three instruments, and $\mathrm{Ded}_4=168$ groups when there are four.

For an arbitrary $J$, we can enumerate these $\mathrm{Ded}_J$ groups as follows. One group that always satisfies VM is composed of ``never-takers'': those units for whom $D_i(\mathbf{z})=0$ for all values $\mathbf{z} \in \mathcal{Z}$. Each of the remaining response groups can be associated with a collection of sets of instruments. These sets represent minimal sets of the instruments that are sufficient for that unit to take treatment, if all instuments in the set take a value of one. For example, in a setting with three instruments, one response group would be the units that take treatment if either $Z_1=1$, or if $Z_2=Z_3=1$. We associate this response group with the collection of sets $\{1\}, \{2,3\}$. Note that by VM, any unit in this group must also take treatment if $Z_1=Z_2=Z_3=1$. Another response group might take treatment \textit{only} if $Z_1=Z_2=Z_3=1$, and is associated with the single set $\{1,2,3\}$. This response group is more ``reluctant'' than the former. The group of always-takers are the least ``reluctant'': they require no instruments to equal one in order for them to take treatment.

Formally, we can associate each response group aside from never-takers with a collection $F$ (which I refer to as a ``family'') of subsets $S \subseteq \{1\dots J\}$ of the instruments. A unit in the response group associated with family $F$ takes treatment when all instruments in any of the $S \in F$ are equal to one. However, we need only consider families for which no $S \in F$ is a subset of some other $S' \in F$. Families of sets having this property are referred to as \textit{Sperner families} (see e.g. \citealt{Kleitman1973}). Families that are not Sperner would be redundant under VM: for example, if $F$ consists of the set $\{2,3\}$ and $\{1,2,3\}$, then given VM the set $\{1,2,3\}$ could be dropped from $F$ without affecting the implied selection function $D_i(\mathbf{z})$.
\begin{definition}[(response group for a Sperner family)]
	For any Sperner family $F$, let $g(F)$ denote the response group described by the following treatment rule: $D_i(\mathbf{z})=\mathbbm{1}(z_j=1 \textrm{ for all } j \in S\textrm{, for at least one of the sets S in F})$.
\end{definition}
\noindent The response groups satisfying VM with $J$ binary instruments are thus: i) the never-takers group; and ii) $Ded_J-1$ further groups $g(F)$ corresponding to each distinct Sperner family (one such family is the null-set, which corresponds to always-takers).\footnote{For an explicit proof these exhaust all distinct $D:\{0,1\}^J \rightarrow \{0,1\}$, see e.g. \citet{Anderson1987} (Sec. 3.4.1).} 

In the simplest example when $J=1$, the only Sperner families are the null set and the singleton $\{1\}$: corresponding to always-takers and compliers, respectively. Together with never-takers, we have the familiar three groups from LATE analysis with a single binary instrument. For $J=2$, the five groups (apart from never takers) map to Sperner families shown in the rightmost column of Table \ref{twobinarynames}. With $J=3$ there are 19 Sperner families,\footnote{These are (listed each within bold brackets for legibility): $ \pmb{\{}\emptyset \pmb{\}}, \pmb{\{}\{1\}\pmb{\}}, \pmb{\{}\{2\}\pmb{\}}, \pmb{\{}\{3\}\pmb{\}}, \pmb{\{}\{1,2\}\pmb{\}}, \pmb{\{}\{1,3\}\pmb{\}},\pmb{\{}\{2,3\}\pmb{\}},$ $\pmb{\{}\{1,2,3\}\pmb{\}},\pmb{\{}\{1\},\{2\}\pmb{\}}, \pmb{\{}\{2\},\{3\}\pmb{\}}, \pmb{\{}\{1\},\{3\}\pmb{\}}, \pmb{\{}\{1\},\{2\},\{3\}\pmb{\}}, \pmb{\{}\{1,2\},\{3\}\pmb{\}}, \pmb{\{}\{1,3\},\{2\}\pmb{\}}, \pmb{\{}\{2,3\},\{1\}\pmb{\}},$\\$\pmb{\{}\{1,2\},\{1,3\}\pmb{\}}, \pmb{\{}\{1,2\},\{2,3\}\pmb{\}}, \pmb{\{}\{1,3\},\{2,3\}\pmb{\}}, \pmb{\{}\{1,2\},\{1,3\},\{2,3\}\pmb{\}}$.} An individual with $G_i = g\left(\{1,2\},\{1,3\},\{2,3\}\right)$, for instance, would take treatment so long as any two of the instruments take a value of one.

\subsection{``Simple'' response groups} \label{simple}
In a slight abuse of notation, let $D_g(\mathbf{z})$ the potential treatments function $D_i(\mathbf{z})$ that is common to all units sharing a value $g$ of $G_i$. A key difference between VM and IAM for identification is that under VM, the functions $D_g(\mathbf{z})$ for various response groups $g$ are not all linearly independent of one another. Indeed, as functions of $J$ binary variables, only $2^J$ such $D_g(\mathbf{z})$ could be independent, while $\mathrm{Ded}_J$ is strictly larger than $2^J$ for $J>1$. Let $\mathcal{G}^{c}:=\mathcal{G}/\{a.t., n.t.\}$ denote the set of $\mathrm{Ded}_J-2$ response groups aside from the never-takers and always takers that are compatible with VM. All of the groups in $\mathcal{G}^c$ can be thought of as generalized ``compliers'' of some kind: units that would vary treatment uptake in \textit{some} way in response to counterfactual changes to the values of the instruments.

We can construct a natural linear basis for the set of selection functions $\{D_g(\mathbf{z})\}_{g \in \mathcal{G}^c}$ by considering response groups $g(F)$ corresponding to Sperner families that consist of a single set $S$. I refer to such response groups, denoted $g(S)$, as \textit{simple}.\footnote{Note that a similar construction plays a central role in \citet{Lee2018a}.} For $J=2$, for example, we have:
$$D_{Z_1}(\mathbf{z})=z_1 \hspace{1cm} D_{Z_2}(\mathbf{z})=z_2\hspace{1cm}  D_{reluctant}(\mathbf{z})=z_1\cdot z_2$$
The selection function for the remaining group, eager compliers, can then be obtained as:
$$D_{eager}(\mathbf{z}) = z_1+z_2-z_1\cdot z_2= D_{Z_1}(\mathbf{z})+D_{Z_2}(\mathbf{z})-D_{reluctant}(\mathbf{z})$$
We can express this linear dependency across all groups by the matrix $M$ in the system:
\begin{equation} \label{m2sys}
	\begin{pmatrix}
		D_{Z_1}(\mathbf{z})\\D_{Z_2}(\mathbf{z})\\D_{reluctant}(\mathbf{z})\\D_{eager}(\mathbf{z})
	\end{pmatrix} = \underbrace{\begin{pmatrix}
			1 & 0 & 0\\
			0 & 1 & 0\\
			0 & 0 & 1\\
			1 & 1 & -1 
	\end{pmatrix}}_{:=M} \begin{pmatrix}
		D_{Z_1}(\mathbf{z})\\D_{Z_2}(\mathbf{z})\\D_{reluctant}(\mathbf{z})
\end{pmatrix} \end{equation}
Let $\mathcal{G}^s$ be the set of all simple response groups in $\mathcal{G}^c$. The set $\mathcal{G}^s$ is isomorphic to the collection of all subsets of $\{1 \dots J\}$ aside from the empty set (which corresponds to always-takers). For arbitrary $J$, we can define a $|\mathcal{G}^c| \times |\mathcal{G}^s|$ matrix $M$ that generalizes the linear system (\ref{m2sys}):
$$D_{g}(\mathbf{z}) = \sum_{g' \in \mathcal{G}^s} M_{gg'} \cdot D_{g'}(\mathbf{z}) \quad \textrm{ for all } g \in \mathcal{G}^c \textrm{ and } \mathbf{z} \in \mathcal{Z}$$
Let $F(g)$ denote the Sperner family corresponding to a given $g \in \mathcal{G}^c$ (i.e. the inverse of the function $g(G)$ in Definition 1). For any $g \in \mathcal{G}^s$, let $S(g)$ denote the lone set $S$ in $F(g)$. Given this notation, the entries of $M$ are given explicitly by the following expression:
\begin{proposition} \label{matrixlemma}
	$M_{gg'} = \sum_{F' \in s(F(g),S(g'))} (-1)^{|F'|+1}$ where $s(F,S') := \left\{F' \subseteq F: \left(\bigcup_{S \in F'} S\right) = S'\right\}$.
\end{proposition}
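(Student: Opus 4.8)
The plan is to read the entries of $M_J$ straight off the closed form of the selection function $\mathcal{D}_{g(F)}(z)$, using one application of inclusion--exclusion together with the idempotency $z_j^2=z_j$, and then to invoke uniqueness of the monomial representation of functions on $\{0,1\}^J$ to conclude that the coefficients obtained are forced.

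First I would record that, by the definition of $g(F)$, a unit in this group takes treatment exactly when $z_S=1$ for some $S\in F$, so that $\mathcal{D}_{g(F)}(z)=\max_{S\in F}z_S$. Since each $z_S\in\{0,1\}$, this maximum equals $1-\prod_{S\in F}(1-z_S)$, and expanding the product gives
$$\mathcal{D}_{g(F)}(z)=\sum_{\emptyset\ne f\subseteq F}(-1)^{|f|+1}\prod_{S\in f}z_S.$$
Next I would collapse each product using $z_j^2=z_j$ on the cube: $\prod_{S\in f}z_S=\prod_{S\in f}\prod_{j\in S}z_j=\prod_{j\in\bigcup_{S\in f}S}z_j=z_{\bigcup_{S\in f}S}$. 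Grouping the sum over $f$ by the value $S':=\bigcup_{S\in f}S$ — which is nonempty, because for $g(F)\in\mathcal{G}^c$ every $S\in F$ is nonempty, hence so is every nonempty union — rewrites the display as
$$\mathcal{D}_{g(F)}(z)=\sum_{S'\ne\emptyset}\left(\sum_{f\in s(F,S')}(-1)^{|f|+1}\right)z_{S'}=\sum_{S'\ne\emptyset}\left(\sum_{f\in s(F,S')}(-1)^{|f|+1}\right)\mathcal{D}_{g(S')}(z),$$
since $z_{S'}=\mathcal{D}_{g(S')}(z)$ is precisely the selection function of the simple compliance group indexing column $S'$. This is exactly the system $\{\mathcal{D}_{g(F)}(z)\}_F=M_J\{\mathcal{D}_{g(S)}(z)\}_S$ with the claimed entries, provided the representation is unique.

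The only step requiring care is therefore arguing that $M_J$ is well-defined, i.e. that the coefficients in such an expansion are unique. This holds because on the full cube $\mathcal{Z}=\{0,1\}^J$ the monomials $\{z_{S'}\}_{S'\subseteq\{1\dots J\}}$ form a basis of the space of real-valued functions on $\mathcal{Z}$, so the $2^J-1$ functions $\{z_{S'}\}_{S'\ne\emptyset}$ are linearly independent; consequently the row of $M_J$ indexed by $F$ must equal $\big(\sum_{f\in s(F,S')}(-1)^{|f|+1}\big)_{S'}$. I expect the main obstacle to be purely expository rather than mathematical: stating the linear-independence fact cleanly and keeping the degenerate indices straight (the empty family, $S'=\emptyset$, and the always-/never-taker groups excluded from $\mathcal{G}^c$) so that the Sperner-family bookkeeping lines up with the underlying combinatorial identity, which itself is valid for any family of nonempty sets.
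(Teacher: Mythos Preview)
Your proof is correct and follows essentially the same route as the paper: both start from $\mathcal{D}_{g(F)}(z)=1-\prod_{S\in F}(1-z_S)$, expand by inclusion--exclusion, collapse the product over $f$ to a single monomial indexed by $\bigcup_{S\in f}S$ (the paper does this via the indicator $\mathbbm{1}(S\subseteq\mathbf{z}(z))$ rather than your algebraic idempotency $z_j^2=z_j$, but these are the same observation), and then group terms by $S'$. Your explicit appeal to linear independence of the monomials $\{z_{S'}\}$ on the cube to secure uniqueness of the coefficients is a welcome addition that the paper leaves implicit.
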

\noindent Fore completeness, note that for $g \in \mathcal{G}^s$, we have $D_{g}(\textbf{z}) = \left(\prod_{j \in S(g)}z_{j}\right) = \mathbbm{1}\left(z_j=1 \textrm{ for all }j \in S(g)\right)$.

\section{Identification under VM with binary instruments} \label{idsec}
In this section I define and characterize the class of causal parameters that are point identified under vector monotonicity, assuming that the instruments are binary and have full support.

Appendix \ref{alternative} shows that the assumption of binary instruments is without loss of generality in the sense that if one begins with finite discrete instruments satisfying vector monotonicity, these discrete instruments can be re-expressed as a larger number of binary instruments in a way that preserves VM. Appendix \ref{alternative} also relaxes the assumption of full-support, which is necessary to make use of this mapping. 

\subsection{Parameters of interest and identification} \label{secid}
To build up parameters of interest, I consider conditional averages of either potential outcome $Y_i(0)$ or $Y_i(1)$, after possible transformation by a function $f$. For $d\in \{0,1\}$, let
\begin{equation} \label{eq:thetaparam}
	\mu^{d}_c := \mathbbm{E}[f(Y_i(d))|C_i=1]
\end{equation}
where $C_i = c(G_i,Z_i)$ is any function $c: \mathcal{G} \times \mathcal{Z} \rightarrow \{0,1\}$ of individual $i$'s response group and their realization of the instruments. Intuitively, the event $C_i=1$ will indicate that unit $i$ belongs to a particular subgroup of generalized ``compliers''. Allowing $c$ to depend on $Z_i$ in addition to $G_i$ lets the practitioner focus attention on compliers that are responsive to some rather than all of the instruments, as described in Section \ref{secexamples}. Functions of the form $c(g,z)$ are the most general type of conditioning event that depends on the primitives of the IV model given in Section \ref{setup}, without depending directly on potential outcomes.\footnote{However, no restrictions are imposed on the joint distribution of $(Y_i(1),Y_i(0),G_i)$, so the model is compatible with $G_i$ being arbitrarily correlated with potential outcomes or with treatment effects, as in Roy-type models.} 

Most of the discussion will center on the class of average treatment effect parameters:
$$\Delta_c := \mathbbm{E}[Y_i(1)-Y_i(0)|C_i=1] = \mu^{1}_c - \mu^{0}_c$$
with $f(y)=y$ the identity function (for this reason I leave $f$ implicit in the notation $\mu^{d}_c$). The form $\Delta_c$ nests many treatment effect parameters familiar both from the LATE \citep{Imbens2018} and marginal treatment effects \citep{Heckman2005} literatures. For instance, with a single binary instrument the LATE sets $c(g,\mathbf{z})=\mathbbm{1}(g=complier)$. 

I now characterize the family of $c(g,\mathbf{z})$ under which identification of $\mu^{d}_c$ is possible. In particular, a necessary and sufficient condition will be what I call ``Property M'':
\begin{definition}[(Property M)]
	We say the function $c(g, \mathbf{z})$ satisfies Property M if both: 
	\begin{enumerate}[i)]
		\item $c(a.t.,\mathbf{z})=c(n.t.,\mathbf{z})=0$ for all $\mathbf{z} \in \mathcal{Z}$
		\item for every $g \in \mathcal{G}^c$ and $\mathbf{z} \in \mathcal{Z}$: $c(g,\mathbf{z})= \sum_{g' \in \mathcal{G}^s} M_{gg'}\cdot c(g',\mathbf{z})$
	\end{enumerate}
\end{definition}
\noindent I'll also say that a parameter $\mu^{d}_c$ or $\Delta_c$ ``satisfies Property M'' if its underlying function $c(g,\mathbf{z})$ does. Recall that the matrix $M$ is defined in Proposition \ref{matrixlemma}.

While Property M is somewhat abstract, the discussion in Section \ref{sec:intuition} will give intuition for its role in identification. Additionally, the following result connects Property M to the basic logic of of \citet{Imbens2018} underlying identification under IAM:
\begin{proposition} \label{propmsuff}
	A function $c: \mathcal{G} \times \mathcal{Z} \rightarrow \{0,1\}$ satisfies Property M if and only if
	$$c(g,\mathbf{z})= \sum_{k=1}^K \left\{D_g(\mathbf{u}_k(\mathbf{z}))-D_g(\mathbf{l}_k(\mathbf{z}))\right\}$$
	for some $K$, where $\mathbf{u}_k(\cdot)$ and $\mathbf{l}_k(\cdot)$ are functions $\mathcal{Z} \rightarrow \mathcal{Z}$ such that $\mathbf{u}_k(\mathbf{z}) \ge \mathbf{l}_k(\mathbf{z})$ component-wise while $\mathbf{l}_k(\mathbf{z}) \ge \mathbf{u}_{k+1}(\mathbf{z})$ component-wise, for all $k$ and $\mathbf{z} \in \mathcal{Z}$.
\end{proposition}
\noindent Proposition \ref{propmsuff} shows that average treatment effects that satisfy Property M can be written in the form $\Delta_c = \mathbbm{E}\left[Y_i(1)-Y_i(0)\left|i \in \bigcup_{k=1}^K \left\{i': D_{i'}(u_k(Z_i))>D_{i'}(l_k(Z_i))\right\}\right.\right]$. Specific examples are discussed in Section \ref{secexamples}. The restriction $\mathbf{u}_k(\mathbf{z}) \ge \mathbf{l}_k(\mathbf{z})$ implies that the expansion of $c(g,z)$ is into terms that each take a value of zero or one given VM, and $\mathbf{l}_k(\mathbf{z}) \ge \mathbf{u}_{k+1}(\mathbf{z})$ implies that only one of these can be equal to one for a given $(g,\mathbf{z})$. The proof of Proposition \ref{propmsuff} shows that we can also take $K \le J/2$ without loss of generality. 

As an example of a function $c$ that does not satisfy Property M, consider $c(g',\mathbf{z}) = \mathbbm{1}(g'=g)$, a function that picks out a single response group $g$. This cannot be written in the form of Proposition \ref{propmsuff} under VM when $J>1$, and we cannot in general identify the average treatment effect $\mathbbm{E}[Y_i(1)-Y_i(0)|G_i=g]$ within single response groups $g$.\footnote{We can see this in a simple example with $J=2$ and $g$ being a $Z_1$ complier. In this case Property M would require that $c(\textrm{eager},\mathbf{z}) = c(Z_1,\mathbf{z})+c(Z_2 ,\mathbf{z})-c(\textrm{reluctant},\mathbf{z})$, i.e. that $0=1+0-0$, by Eq. (\ref{m2sys}).} Another example is the ATE, as $c(g,\mathbf{z})=1$ for all $g \in \mathcal{G}$ including always- and never-takers violates item i) of Property M.

Causal parameters that satisfy Property M are identified under VM with binary instruments, provided the various instruments provide sufficient independent variation in treatment uptake. A simple sufficient condition for this is that the instruments have full (rectangular) support. This assumption is stronger than necessary (Appendix \ref{alternative} gives a generalization), but simplifies presentation. Let $\mathbb{S}_Z:=\{\mathbf{z} \in \mathcal{Z}: P(Z_i=\mathbf{z})>0\}$ be the support of the random variable $Z_i$.\footnote{I distinguish between $\mathbb{S}_Z$ and the set of conceivable values $\mathcal{Z}$ because some results (e.g. Proposition \ref{VMtest}) can leverage the assumption that VM holds for values $\mathbf{z} \in \mathcal{Z}$ even if they have zero probability in the population.}
\begin{assumption*} [3 (full support)] 
	$\mathbb{S}_Z=\{0,1\}^J$ .
\end{assumption*}

An alternative expression of Assumption 3 is useful for stating the constructive identification result below. For an arbitrary ordering $g_1 \dots g_k$ of the $k:=2^J-1$ simple response groups in $\mathcal{G}^s$, define a $k$-component random vector $\Gamma_i=(D_{g_1}(Z_i), \dots, D_{g_k}(Z_i))'$ where each component gives the treatment status for a particular response group given realization $Z_i$ of the instruments.\footnote{\label{fn:equivalently}Equivalently, $\Gamma_i= (Z_{S_1i} \dots, Z_{S_ki})'$ for some arbitrary ordering of the $k=2^J-1$ non-empty subsets $S \subseteq \{1\dots J\}$, where $Z_{Si} := \prod_{j \in S} Z_{ji}$ and $g_\ell = g(S_\ell)$ for $\ell=1 \dots k$.} Let $\Sigma$ be the $k \times k$ variance-covariance matrix of $\Gamma_i$.
\begin{lemma} \label{lemmaequiv}
	Assumption 3 holds if and only if $\Sigma$ has full rank.
\end{lemma}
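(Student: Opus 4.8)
The plan is to exploit that $\Sigma$, being a covariance matrix, is positive semidefinite, so it has full rank if and only if $a'\Sigma a>0$ for every nonzero $a\in\mathbb{R}^k$. Since $a'\Sigma a=\mathrm{Var}(a'\Gamma_i)$, full rank of $\Sigma$ is equivalent to the statement that no nontrivial linear combination $a'\Gamma_i$ is almost surely equal to a constant — equivalently, that the coordinates $Z_{S_1i},\dots,Z_{S_ki}$ of $\Gamma_i$ satisfy no nontrivial \emph{affine} relation with probability one. So the Lemma reduces to connecting this ``no a.s.\ affine relation'' property to the full-support Assumption 3.

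The key algebraic input I would invoke is the elementary fact that the $2^J$ monomial functions $z\mapsto z_S=\prod_{j\in S}z_j$ on $\{0,1\}^J$, indexed by all subsets $S\subseteq\{1,\dots,J\}$ (with $S=\emptyset$ giving the constant function $1$), are linearly independent — indeed a basis of $\mathbb{R}^{\{0,1\}^J}$. Identifying $z$ with $U(z)=\{j:z_j=1\}$ makes $z_S=\mathbbm{1}(S\subseteq U(z))$, so the $2^J\times 2^J$ evaluation matrix has entries $\mathbbm{1}(S\subseteq U)$ and is the zeta matrix of the Boolean lattice, which is unitriangular with respect to any linear extension of $\subseteq$ and hence invertible; alternatively one can write each point-indicator $\mathbbm{1}(z=z^*)$ as the explicit multilinear polynomial $\prod_{j:z^*_j=1}z_j\prod_{j:z^*_j=0}(1-z_j)$ to see that the $2^J$ monomials span the whole function space.

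For the direction ``Assumption 3 $\Rightarrow$ $\Sigma$ full rank'' I would argue by contradiction: an a.s.\ affine relation $\sum_l a_l Z_{S_li}=c$ with $a\neq 0$ would, by positivity of $P(Z_i=z)$ at every $z$, hold identically on $\{0,1\}^J$, i.e.\ $\sum_l a_l z_{S_l}-c\,z_\emptyset=0$ as a function, a nontrivial linear dependence among the monomials — impossible. For the converse I would prove the contrapositive: if some $z^*$ has $P(Z_i=z^*)=0$, then $\mathrm{supp}(Z_i)$ has at most $2^J-1=k$ points, so the $k+1$ functions $z_\emptyset,z_{S_1},\dots,z_{S_k}$ restricted to that support lie in a space of dimension at most $k$ and are therefore linearly dependent, yielding $(a_0,\dots,a_k)\neq 0$ with $a_0+\sum_l a_l z_{S_l}=0$ on the support; then $a'\Gamma_i=-a_0$ almost surely, so $a'\Sigma a=\mathrm{Var}(a'\Gamma_i)=0$ with $a\neq 0$, and positive semidefiniteness of $\Sigma$ forces it to be singular.

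The argument is essentially bookkeeping, so I do not anticipate a real obstacle; the one point that needs a brief check is the last step, where one must confirm that the dependence vector found on the restricted function space has nonzero $\Gamma$-part $a=(a_1,\dots,a_k)$ rather than being carried entirely by the constant coefficient $a_0$ — this holds because the support is nonempty, so $a=0$ would force $a_0=0$ as well, contradicting nontriviality.
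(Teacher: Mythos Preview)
Your proof is correct and rests on the same core fact as the paper's: the $2^J$ multilinear monomials $z\mapsto z_S$ form a basis of the function space on $\{0,1\}^J$, so that an affine relation among the $Z_{S_l i}$ that holds on the full hypercube must be trivial. The packaging differs, however. The paper introduces the uncentered second-moment matrix $\Sigma^*:=E[(1,\Gamma_i')'(1,\Gamma_i')]$ and the explicit change-of-basis matrix $A$ with $(1,\Gamma_i')A=\mathfrak{Z}_i'$ (the vector of point-indicators), obtains $\Sigma^*={A'}^{-1}\mathrm{diag}\{P(Z_i=z)\}A^{-1}$, and reads off that $\Sigma^*$ has full rank iff every $P(Z_i=z)>0$; it then argues separately that $\Sigma^*$ is invertible iff $\Sigma$ is. Your route bypasses $\Sigma^*$ and $A$ entirely by using the variance identity $a'\Sigma a=\mathrm{Var}(a'\Gamma_i)$ and a clean dimension-counting argument for the contrapositive direction. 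Your version is a bit more streamlined for this lemma in isolation; the paper's version has the side benefit of producing the explicit matrix $A$ (with entries $A_{S,z}=\sum_{f\subseteq z_0,\,(z_1\cup f)=S}(-1)^{|f|}$), which is reused in Corollary~\ref{cefcorr} and in the covariate discussion.
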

\noindent Lemma \ref{lemmaequiv} demonstrates that full support of the instruments is equivalent to there being linearly independent variation in treatment take-up among all of the simple response groups.

Theorem \ref{thmid} provides an explicit estimand for $\mu^{d}_c$ when the function $c$ satisfies Property M:
\begin{theorem} \label{thmid}
	Under Assumptions 1-3 (independence \& exclusion, VM, and full support), for any $c$ satisfying Property M and any measurable function $f(\cdot)$: 
	$$P(C_i=1)=E[h(Z_i)D_i] \quad \quad \textrm{ and } \quad \quad \mu^{d}_c = (-1)^{d+1} \frac{\mathbbm{E}[f(Y_i)h(Z_i)\mathbbm{1}(D_i=d)]}{\mathbbm{E}[h(Z_i)D_i]},$$ \vspace{.1cm}
	provided that $P(C_i=1)>0$, where $h(Z_i) = \mathbf{\lambda}'\Sigma^{-1}(\Gamma_i - \mathbbm{E}[\Gamma_i])$ and $$\mathbf{\lambda} =(\mathbbm{E}[c(g_1,Z_i)],\mathbbm{E}[c(g_2,Z_i)], \dots \mathbbm{E}[c(g_k,Z_i)])'$$
\end{theorem}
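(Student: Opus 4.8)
The plan is to express the target parameter $\theta^{fd}_c$ as a ratio of two expectations and then show that $h(Z_i)$, as defined, is exactly the weighting function that isolates the subpopulation $\{C_i = 1\}$. First I would write $E[f(Y_i)h(Z_i)\mathbbm{1}(D_i=d)]$ by conditioning on $(G_i, Z_i)$: using exclusion ($Y_i(d,z) = Y_i(d)$) and independence (Assumption 1), $E[f(Y_i)\mathbbm{1}(D_i=d) \mid G_i = g, Z_i = z] = E[f(Y_i(d)) \mid G_i = g]\cdot \mathbbm{1}(\mathcal{D}_g(z) = d)$, and $Z_i \perp G_i$. So the numerator becomes $\sum_{g} p_g\, E[f(Y_i(d))\mid G_i=g]\, E[h(Z_i)\mathbbm{1}(\mathcal{D}_g(Z_i)=d)]$. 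The key algebraic claim to establish is that for $d=1$, $E[h(Z_i)\mathbbm{1}(\mathcal{D}_g(Z_i)=1)] = E[h(Z_i)\mathcal{D}_g(Z_i)] = E[c(g, Z_i)]$ for every $g \in \mathcal{G}$ (and the analogous identity with a sign flip handles $d=0$ via $\mathbbm{1}(\mathcal{D}_g(Z)=0) = 1 - \mathcal{D}_g(Z)$ together with $E[h(Z_i)] = 0$).

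The heart of the argument is therefore showing $E[h(Z_i)\mathcal{D}_g(Z_i)] = E[c(g,Z_i)]$. For never-takers and always-takers this is immediate: $\mathcal{D}_{n.t.} \equiv 0$ gives $0 = E[c(n.t.,Z)]$ since Property M forces $c(n.t.,\cdot)=0$; $\mathcal{D}_{a.t.}\equiv 1$ gives $E[h(Z_i)] = 0$ by construction of $h$ as a centered linear form, matching $c(a.t.,\cdot)=0$. For the simple groups $g(S)$, note $\mathcal{D}_{g(S)}(z) = z_S = \Gamma$-component, so $E[h(Z_i)\mathcal{D}_{g(S)}(Z_i)] = \lambda'\Sigma^{-1} E[(\Gamma_i - E\Gamma_i)\Gamma_{Si}] = \lambda'\Sigma^{-1}\Sigma e_S = \lambda' e_S = E[c(g(S),Z_i)]$, using that $\Sigma$ is invertible by Lemma \ref{lemmaequiv} and that the $S$-th entry of $\lambda$ is precisely $E[c(g(S),Z_i)]$. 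Then for a general non-simple $g \in \mathcal{G}^c$, I would invoke the linear-dependence structure: $\mathcal{D}_{g(F)}(z) = \sum_{S} [M_J]_{F(g),S}\,\mathcal{D}_{g(S)}(z)$ pointwise (this is the defining system for $M_J$ from Section \ref{manybinary}), so by linearity $E[h(Z_i)\mathcal{D}_g(Z_i)] = \sum_S [M_J]_{F(g),S} E[c(g(S),Z_i)] = E\big[\sum_S [M_J]_{F(g),S}\, c(g(S),Z_i)\big] = E[c(g,Z_i)]$, where the last equality is exactly Property M. This shows the numerator equals $(-1)^{d+1}\big(\text{sign/offset terms}\big) + \sum_g p_g E[f(Y_i(d))\mid G_i=g]E[c(g,Z_i)]$; being careful, for $d=1$ it is $\sum_g p_g E[f(Y_i(1))\mid G_i=g]E[c(g,Z_i)] = E[f(Y_i(1)) c(G_i,Z_i)] = E[f(Y_i(1))\mathbbm{1}(C_i=1)] = \theta^{y1}_c \cdot P(C_i=1)$ using $Z_i \perp (G_i, Y_i(1))$ again in reverse.

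Finally I would handle the denominator: $E[h(Z_i)D_i] = \sum_g p_g E[h(Z_i)\mathcal{D}_g(Z_i)] = \sum_g p_g E[c(g,Z_i)] = E[c(G_i,Z_i)] = P(C_i = 1)$, which is positive by hypothesis, so the ratio is well-defined and the $P(C_i=1)$ factors cancel, yielding the stated formula. For $d=0$ I would redo the numerator step with $\mathbbm{1}(\mathcal{D}_g(Z)=0) = 1 - \mathcal{D}_g(Z)$, getting $E[h(Z_i)(1-\mathcal{D}_g(Z_i))] = 0 - E[c(g,Z_i)] = -E[c(g,Z_i)]$, which produces the $(-1)^{d+1}$ sign.

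I expect the main obstacle to be bookkeeping rather than conceptual: making the decomposition-and-reassembly of expectations over $(G_i,Z_i)$ rigorous (especially the repeated use of $Z_i \perp G_i$ to pass between $E[\,\cdot\, c(G_i,Z_i)]$ and $\sum_g p_g E[\,\cdot\mid G_i=g]E[c(g,Z_i)]$, which is valid precisely because $c(g,\cdot)$ is nonrandom given $g$), and verifying that the identity $\mathcal{D}_{g(F)} = \sum_S [M_J]_{F,S}\,\mathcal{D}_{g(S)}$ holds as a pointwise identity on all of $\mathcal{Z} = \{0,1\}^J$ — this is where Assumption 3 (full support, equivalently $\Sigma$ nonsingular via Lemma \ref{lemmaequiv}) is essential, since it guarantees the simple selection functions $\{z_S\}$ are linearly independent as functions on the realized support, so $M_J$ is the unique such matrix and the Property M condition on $c$ mirrors it exactly.
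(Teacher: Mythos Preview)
Your proposal is correct and follows essentially the same approach as the paper's proof: decompose the numerator and denominator over compliance groups via Assumption 1, verify $E[h(Z_i)\mathcal{D}_g(Z_i)]=E[c(g,Z_i)]$ first for simple groups by direct computation with $\Sigma^{-1}$ and then for all $g\in\mathcal{G}^c$ via the $M_J$ identity combined with Property M, and handle the $d=0$ case through $E[h(Z_i)]=0$. One small clarification: the pointwise identity $\mathcal{D}_{g(F)}=\sum_S[M_J]_{F,S}\mathcal{D}_{g(S)}$ on $\{0,1\}^J$ is a purely combinatorial fact (Proposition \ref{matrixlemma}) that does not require Assumption 3; full support enters only to guarantee $\Sigma$ is invertible so that $h$ is well-defined.
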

\noindent It follows immediately from Theorem \ref{thmid} that conditional average treatment effects $\Delta_c=\mu^{1}_c - \mu^{0}_c$ satisfying Property M are identified, and the expression simplifies to: $\Delta_c = \mathbbm{E}[h(Z_i)Y_i]/\mathbbm{E}[h(Z_i)D_i]$ (using that $ \mathbbm{1}(D_i=0)+\mathbbm{1}(D_i=1)=1$). Note that as the numerator of $\Delta_c$ depends on $Z_i$ and $Y_i$ only and the denominator depends on $Z_i$ and $D_i$ only, identification of $\Delta_c$ would hold in a ``split-sample'' setting where $Y_i$ and $D_i$ are not necessarily known for the same individual.

Now I show that Theorem \ref{thmid} has a converse: \textit{any} identified $\Delta_c$ must satisfy Property M. In this sense, Property M is both necessary and sufficient for identification. To state this result, let us consider so-called ``IV-like estimands'' introduced by \citet{Mogstad2018}, which are any cross moment $\mathbbm{E}[s(D_i,Z_i)Y_i]$ between $Y_i$ and a function of treatment and instruments for some function $s$. Let $\mathcal{P}_{DZ}$ denote the joint distribution of $D_i$ and $Z_i$, which is identified. Then:
\begin{theorem} \label{necctheorem}
	Suppose $\mu^{d}_c$ is identified for each $d\in \{0,1\}$ by a finite set of IV-like estimands and $\mathcal{P}_{DZ}$, given Assumptions 1-3 and $P(C_i=1)>0$. Then $\mu^{d}_c$ satisfies Property M.
\end{theorem}
\noindent Since the identification approach of MTW2 relies on IV-like estimands for identification, Theorem \ref{necctheorem} implies that any parameter of the form $\Delta_c$ that is identified by MTW2's approach is also identified by Theorem \ref{thmid} (provided no additional restrictions are leveraged with MTW2's approach).\footnote{In saying that a parameter $\theta$ is \textit{identified} by a particular set of empirical estimands, I mean that the set of values of $\theta$ that are compatible with the empirical estimands and maintained assumptions is a singleton, for any joint distribution of the model primitives---in this case $(G_i,Y_i(1), Y_i(0),Z_i)$---that is compatible with those assumptions \citep{idzoo}.\label{fn:identified}} In Appendix \ref{sec:compare} I show that the reverse is also true: when MTW2's approach to identification is leveraged with a ``complete'' set of IV-like estimands, it also point identifies all parameters $\Delta_c$ that satisfy Property M.

\subsection{An algebraic intuition for Theorem \ref{thmid}} \label{sec:intuition}

Before turning to examples, this section provides an algebraic intuition for Theorem \ref{thmid}. For simplicity, I focus on average treatment effect parameters $\Delta_c$.

By Assumption 1 and the law of iterated expectations, we can write any $\Delta_c$ as a weighted average over response-group specific average treatment effects $\Delta_g:=\mathbbm{E}[Y_i(1)-Y_i(0)|G_i=g]$:
\begin{align}
	\Delta_c = \sum_{g \in \mathcal{G}}\left\{\frac{P(G_i=g)\cdot \mathbbm{E}[c(g,Z_i)]}{\sum_{g' \in \mathcal{G}}P(G_i=g')\cdot \mathbbm{E}[c(g',Z_i)]}\right\}\cdot \Delta_g \label{deltaw}
\end{align}
where notice that the weight on $\Delta_g$ is proportional to the quantity $\mathbbm{E}[c(g,Z_i)]$, as well as $P(G_i=g)$. Now consider a general type of IV estimand in which a single scalar $h(Z_i)$ is constructed from the vector of instruments $Z_i$ according to a function $h$, and then used as a single ``instrument'' in linear IV regression.\footnote{Special cases of this form include 2SLS: $h(\mathbf{z}) = \mathcal{P}(\mathbf{z})$, and Wald-type estimands: $h(\mathbf{z}) = \frac{\mathbbm{1}(Z_i=\mathbf{z})}{P(Z_i=\mathbf{z})}-\frac{\mathbbm{1}(Z_i=\mathbf{z'})}{P(Z_i=\mathbf{z'})}$.}
Some algebra shows that under Assumption 1:
\begin{equation}
	\frac{Cov(Y_i,h(Z_i))}{Cov(D_i,h(Z_i))} = \sum_{g \in \mathcal{G}} \frac{P(G_i=g)\cdot Cov(D_g(Z_i), h(Z_i))}{\sum_{g' \in \mathcal{G}}P(G_i=g')\cdot Cov(D_{g'}(Z_i), h(Z_i))} \cdot \Delta_g \label{2slslike}
\end{equation}
These estimands therefore also uncover a weighted average of the $\Delta_g$, similar to (\ref{deltaw}). In (\ref{2slslike}) however, the weight placed on each response group $g$ is governed by the covariance between $D_g(Z_i)$ and $h(Z_i)$. Thus a simple IV estimand using $h(Z_i)$ can identify $\Delta_c$ if the function $h$ is chosen in such a way that $Cov(D_{g}(Z_i), h(Z_i))=\mathbbm{E}[c(g,Z_i)]$ for each of the response groups $g$. Since the covariance operator is linear, the linear dependencies among the functions $D_g(\cdot)$ captured by the matrix $M$ in Section \ref{manybinary} translate into a linear restrictions that must hold among the $\mathbbm{E}[c(g,Z_i)]$. Property M guarantees that the $\mathbbm{E}[c(g,Z_i)]$ satisfy these restrictions, regardless of the distribution of $Z_i$. What remains is then to simply ``tune'' the covariances $Cov(D_{g}(Z_i), h(Z_i))$ for each simple response group $g \in \mathcal{G}^s$ by careful choice of $h(\cdot)$. This is possible when the instruments have full support via the function $h(\cdot)$ defined in Theorem \ref{thmid}. A direct proof of Theorem \ref{thmid} along these lines is provided in the Online Appendix. The main proof provided in Appendix \ref{proofs} is more involved, and is structured around building a foundation for the comparison to the identification approach of MTW2 in Appendix \ref{sec:compare}.

The need for Property M in Theorem \ref{thmid} thus arises from there being under VM more response groups in $\mathcal{G}^c$ than there are independent pairs of points in the support of the instruments. This contrasts with IAM, under which both are generally equal (with binary instruments) to $2^J-1$.\footnote{Under IAM, there is an order on the $2^J$ points in $\mathcal{Z}$ such that between any two adjacent instrument values $\mathbf{z}, \mathbf{z}'$ along that order, there is a type of complier $g$ that first takes treatment at $\mathbf{z}$, and $\Delta_g = \frac{\mathbbm{E}[Y_i|Z_i=\mathbf{z}']-\mathbbm{E}[Y_i|Z_i=\mathbf{z}]}{\mathbbm{E}[D_i|Z_i=\mathbf{z}']-\mathbbm{E}[D_i|Z_i=\mathbf{z}]}$.} As a result, it is possible under IAM to identify $\Delta_{g}$ for any single such response group $g \in \mathcal{G}^c$. However, under VM the corresponding choice $c(g',\mathbf{z}) = \mathbbm{1}(g'=g)$ fails to satisfy Property M, as described in Section \ref{secid}.

\subsection{Some examples from the family of identified parameters} \label{secexamples}
This section highlights some of parameters $\Delta_c$ that are identified under VM according to Theorem \ref{thmid}, and discusses their interpretation in the returns to schooling setting mentioned in the introduction. Let $\Delta_i = Y_i(1)-Y_i(0)$ be the treatment effect for unit $i$, and $\mathcal{J} \subseteq \{1, \dots J\}$ be any subset of the instruments. Proposition \ref{propmsuff} shows that each of the parameters introduced in Table \ref{tablelates} below satisfy Property M when $\mathcal{Z} = \{0,1\}^J$, and are hence identified by Theorem \ref{thmid}.\footnote{\label{fn:closure} Some further examples of identified parameters from those mentioned in Table \ref{tablelates} can be constructed using a closure property of the set of $c$ satisfying Property M. Let $\mathcal{C}$ denote the set of $c: \mathcal{G} \times \mathcal{Z} \rightarrow \{0,1\}$ that satisfy Property M, and let $c_a(g,\mathbf{z})$ and $c_b(g,\mathbf{z})$ be two functions in $\mathcal{C}$. Then $c_a(g,\mathbf{z})-c_b(g,\mathbf{z}) \in \mathcal{C}$ iff $c_b(g,\mathbf{z}) \le c_a(g,\mathbf{z})$ for all $\mathbf{z} \in \mathcal{Z},g \in \mathcal{G}^c$. We can use this observation to generate identified parameters that condition on the complement of the complier group for ${c_b}$ within the larger complier group for ${c_a}$. For example with $J=2$, consider the average treatment effect among individuals who are counted in the ACLATE but not in $SLATE_{\{1\}}$:
	$\mathbbm{E}[\Delta_i|G_i \in \mathcal{G}^c \textrm{ but } \left\{D_i(1,Z_{2i}) = D_i(0,Z_{2i})\right\}]$. This represents the average effect among individuals that would not respond to reduction in college tuition alone, but would respond if both tuition and proximity were shifted in concert.}

\begin{table}[bth!]
	\centering 
	\small
	\begin{tabular}{Hl|c|Hcc}
		&\textbf{Name}&\textbf{Definition}&& \textbf{Prop.} \ref{propmsuff} \textbf{form of} $\mathbf{c(g,z)}$\\ \hline
		1.& $ACLATE$ & $\mathbbm{E}[\Delta_i|G_i \in \mathcal{G}^c]$& $\mathbbm{1}(g \in \mathcal{G}^c)$& $D_g(1\dots 1)-D_g(0\dots0)$ \\
		2. & $ SLATE_{\mathcal{J}}$ & $\mathbbm{E}[\Delta_i|D_i(1\dots1,Z_{-\mathcal{J},i})>D_i(0\dots0,Z_{-\mathcal{J},i})]$& $\mathbbm{E}[\Delta_i|D_i(1\dots1,\mathbf{z}_{-\mathcal{J}})>D_i(0\dots0,\mathbf{z}_{-\mathcal{J}})]$& $D_g(1\dots 1,\mathbf{z}_{-\mathcal{J}})-D_g(0\dots0,\mathbf{z}_{-\mathcal{J}}))$\\
		3. & $ SLATT_{\mathcal{J}}$ &  $\mathbbm{E}[\Delta_i|D_i(1\dots1,Z_{-\mathcal{J},i})>D_i(0\dots0,Z_{-\mathcal{J},i}),D_i=1]$&$D_g(z)\cdot \left(D_g(1\dots1,\mathbf{z}_{-\mathcal{J}})-D_g(0\dots0,\mathbf{z}_{-\mathcal{J}}))\right)$&$D_g(z)-D_g(0\dots0,\mathbf{z}_{-\mathcal{J}})$\\
		4. & $ SLATU_{\mathcal{J}}$& $\mathbbm{E}[\Delta_i|D_i(1\dots1,Z_{-\mathcal{J},i})>D_i(0\dots0,Z_{-\mathcal{J},i}),D_i=0]$& $(1-D_g(z))\cdot \left(D_g((1\dots1),\mathbf{z}_{-\mathcal{J}})-D_g(0\dots0,\mathbf{z}_{-\mathcal{J}}))\right)$&$D_g(1\dots1,\mathbf{z}_{-\mathcal{J}})-D_g(z)$\\
		5. &$PTE_j(\mathbf{z}_{-j})$ & $\mathbbm{E}[\Delta_i|D_i(1,\mathbf{z}_{-j})>D_i(0,\mathbf{z}_{-j})]$&& $D_g(1,\mathbf{z}_{-j})-D_g(0,\mathbf{z}_{-j}))$\\
	\end{tabular} \\ \vspace{.5cm}
	\caption{Examples of identified treatment effect parameters under VM (see text for details). Here $Z_{-\mathcal{J},i}$ denotes the components $Z_{ji}$ of $Z_i$ for $j \notin \mathcal{J}$, and $(d\dots d,Z_{-\mathcal{J},i})$ denotes a vector in which the remaining components $Z_j$ for $j \in \mathcal{J}$ are all equal to $d$. \label{tablelates}}
\end{table}
\noindent I call the first item in Table \ref{tablelates} the ``all-compliers LATE'' (ACLATE). The ACLATE is the average treatment effect among all units who would change their treatment uptake in any way in response to the instruments, and is the largest subgroup of the population for which treatment effects can be generally point identified from instrument variation alone. In the returns to schooling example, the ACLATE can be described as the average treatment effect among individuals who would go to college were it close and cheap, but would not were it far and expensive. 

A \textit{set local average treatment effect}, or $SLATE_\mathcal{J}$, captures the average treatment effect among units that move into treatment when all instruments in some fixed set $\mathcal{J}$ are changed from 0 to 1, with the other instruments not in $\mathcal{J}$ remaining at that unit's realized values. The ACLATE is a special case of SLATE when $\mathcal{J} = \{1, \dots J\}$. In the other extreme where $\mathcal{J}$ contains just one instrument index, SLATE recovers treatment effects among those who would ``comply'' with variation in that instrument alone. For example, $SLATE_{\{2\}}$ is the average treatment effect among individuals who don't go to college if it is far, but do if it is close (given their realized value of the tuition instrument).\footnote{\label{fncontrol }Note that a single-instrument SLATE like $SLATE_{\{2\}}$ does \textit{not} generally correspond to using $Z_2$ alone as an instrument, e.g. $Cov(Y,Z_2)/Cov(D,Z_2)$, unless $Z_1$ and $Z_2$ are independent.} This parameter may for example be of interest to policymakers considering whether to expand a community college to a new campus, and is related to the marginal treatment effect curve for instrument $j$ (see Appendix \ref{sec:compare}).

The treatment effect parameters $SLATT_{\mathcal{J}}$ and $SLATU_{\mathcal{J}}$ are similar to $SLATE_{\mathcal{J}}$ but additionally condition on units' realized treatment status. For example $SLATT_{\{1,2\}}$ with our two instruments averages over individuals who do go to college, but wouldn't have gone were it far and expensive.\footnote{Note that with a single binary instrument, $SLATT_{\{1\}}$ coincides with $ACLATE=SLATE_{\{1\}}$, as $\mathbbm{E}[\Delta_i|D_i=1, G_i = complier]=\mathbbm{E}[\Delta_i|Z_i=1, complier] = \mathbbm{E}[\Delta_i|complier]$, using Assumption 1. However, when the group $\mathcal{G}^c$ consists of more than one group, the ``all-compliers'' version of $SLATT$ generally differs from $ACLATE$.} The final row of Table \ref{tablelates} gives the most disaggregated type of identified parameter that can be identified under VM, what might be called a \textit{partial treatment effect} $PTE_j(\mathbf{z}_{-j})$. This is the average treatment effect among individuals that move into treatment when a single instrument $j$ is shifted from zero to one, while the other instrument values are held fixed at some explicit vector of values $\mathbf{z}_{-j}$. An example is the average treatment effect among individuals who go to college if it is close and cheap, but do not if it is far and cheap. 

Section \ref{est} discusses estimation of the parameters listed in Table \ref{tablelates}. The next section first outlines some further remarks on identification under VM.

\subsection{Extensions and further results on identification} \label{secdiscussion}

\textit{1) Linear dependency among the instruments.} Assumption 3 is stronger than is strictly necessary for identification, since linear dependencies between products of the instruments pose no problem if the corresponding ``weights'' in $\Delta_c$ do not need be tuned independently from one another. In Appendix \ref{alternative}, I give a version of Assumption 3 and generalization of Theorem \ref{thmid} that can accommodate instrument support restrictions and instruments that are not binary.

\textit{2) Conditional distributions of the potential outcomes.} By choosing $f(Y) = \mathbbm{1}(Y \le y)$ for a value $y$ in the support of $Y_i$, we can through Theorem \ref{thmid} identify the CDF of each potential outcome at $y$ conditional on $C_i=1$ (provided that $(Y_i,Z_i,D_i)$ are all observed in the same sample). This allows for the identification of quantile treatment effects or bounds on the distribution of treatment effects \citep{Theory2019}, in either case conditional on $C_i=1$.

\textit{3) Identified sets for ATE, ATT, and ATU.} When $Y_i$ has bounded support, we can generate sharp bounds in the spirit of \citet{Manski2019} for parameters like the average treatment effect (ATE), using the identified parameters in this paper. For example, the ATE can be written as $ATE := \mathbbm{E}[Y_i(1)-Y_i(0)] = p_a \Delta_{a} + p_n \Delta_{n} + (1-p_n-p_a)ACLATE$,
where $\Delta_a=\mathbbm{E}[Y_i(1)-Y_i(0)|G_i=a.t.]$, and $p_a=P(G_i=a.t.)$ (and analogously for $\Delta_n$ and $p_n$). Both $p_a$ and $p_n$ are point identified, while bounds on $\Delta_n$ and $\Delta_a$ can be obtained from the support of $Y_i$. The SLATT and SLATU can similarly be used to construct bounds on the average treatment effect on the treated (ATT) or untreated (ATU).

\section{Estimation} \label{est}
This section proposes a simple two-step estimator for the family of identified causal parameters introduced in Section \ref{idsec}, focusing on the conditional average treatment effects $\Delta_c$. The estimator is asymptotically normal and converges at the parametric rate.

Theorem \ref{thmid} establishes that a $\Delta_c$ satisfying Property M is identified by a ratio of two population expectations $\mathbbm{E}[h(Z_i)Y_i]/\mathbbm{E}[h(Z_i)D_i]$, so a natural estimator simply replaces these expectations with their sample counterparts, plugging in a first-step estimate of $h(Z_i)$. Recall that the function $h(\cdot)$ is defined from the vector $\mathbf{\lambda} =(\mathbbm{E}[c(g_1,Z_i)], \dots \mathbbm{E}[c(g_k,Z_i)])'$ where $k=2^J-1$. Given an $i.i.d.$ sample of size $n$ and a consistent estimator $\hat{\mathbf{\lambda}}$ of $\mathbf{\lambda}$, we can estimate $\Delta_c$ by $\hat{\rho}(\hat{\mathbf{\lambda}})$, where
\begin{equation} \label{rhonoreg}
	\hat{\rho}(\mathbf{\lambda}) := \left((0,\mathbf{\lambda}')(\Gamma'\Gamma)^{-1}\Gamma'D\right)^{-1}(0,\mathbf{\lambda}')(\Gamma'\Gamma)^{-1}\Gamma'Y
\end{equation}
and we introduce a $n \times 2^J$ matrix $\Gamma$ comprised of rows $(0,\Gamma_i')$ for each observation $i$, as well as $n \times 1$ vectors $D$ and $Y$ comprised of observations of $D_i$ and $Y_i$.\footnote{To obtain (\ref{rhonoreg}), recall that $h(Z_i) = (\Gamma_i - \mathbbm{E}[\Gamma_i])'\mathbbm{E}[(\Gamma_i - \mathbbm{E}[\Gamma_i])(\Gamma_i - \mathbbm{E}[\Gamma_i])']^{-1} \mathbf{\lambda}$. Accordingly, let $\hat{H} =  n \tilde{\Gamma}(\tilde{\Gamma}'\tilde{\Gamma})^{-1}\hat{\mathbf{\lambda}}$ be a vector $\hat{H}$ of estimates for $h(Z_i)$, where $\tilde{\Gamma}$ is a $n \times k$ matrix with entries $\tilde{\Gamma}_{il} = D_{g_\ell}(Z_i) - \frac{1}{n}\sum_{j=1}^n D_{g_\ell}(Z_j)$, and $g_\ell$ is the $\ell^{th}$ response group for some arbitrary ordering of the $k:=2^J-1$ groups $g_\ell \in \mathcal{G}^s$. Now consider $\hat{\Delta}_c = (\hat{H}'D)^{-1}(\hat{H}'Y)$, where $Y$ and $D$ are $n \times 1$ vectors of observations of $Y_i$ and $D_i$, respectively. By the Frisch-Waugh-Lovell theorem, $(\tilde{\Gamma}'\tilde{\Gamma})^{-1}\tilde{\Gamma}'D$ is the same as the final $k$ components of the vector $(\Gamma'\Gamma)^{-1}\Gamma'D$. Thus $\mathbf{\lambda}'(\tilde{\Gamma}'\tilde{\Gamma})^{-1}\tilde{\Gamma}'D = (0,\mathbf{\lambda}')(\Gamma'\Gamma)^{-1}\Gamma'D$, and similarly for $Y$.} Note that the population analog of $(\Gamma'\Gamma)^{-1}$ exists by Assumption 3. However the RHS of (\ref{rhonoreg}) is still consistent for $\Delta_c$ when Assumption 3 is relaxed as in Appendix \ref{alternative} (with $\lambda$ and $\Gamma$ modified as described therein).

Table \ref{tablelatesest} below gives examples of $\hat{\mathbf{\lambda}}$ for leading treatment effect parameters. With $\hat{\lambda} \stackrel{p}{\rightarrow} \lambda$ in each of these examples, we have that $\hat{\rho}(\hat{\mathbf{\lambda}}) \stackrel{p}{\rightarrow} \sum_{g \in \mathcal{G}^c}\frac{P(G_i=g)[M\mathbf{\lambda}]_g}{\sum_{g' \in \mathcal{G}^c}P(G_i=g')[M\mathbf{\lambda}]_{g'}}\cdot \Delta_g$ under standard regularity conditions. Matching this estimand to particular parameters $\Delta_c$ that satisfy Property M is achieved by choosing $\hat{\mathbf{\lambda}}$ appropriately for that $\Delta_c$. Asymptotic normality $\hat{\rho}(\hat{\mathbf{\lambda}})$ follows as a special case of Theorem 3 in \citet{Imbens2018}, which provides an expression of the estimator's asymptotic variance. The estimator $\hat{\Delta}_c=\hat{\rho}(\hat{\mathbf{\lambda}})$ and accompanying confidence intervals for $\Delta_c$ are implemented in the Stata package \texttt{ivcombine}.\\

\begin{table}[bth!]
	\centering 
	\begin{tabular}{Hl|c}
		&\textbf{Parameter}&Estimator $\hat{\mathbf{\lambda}}$ of population $\mathbf{\lambda}$\\
		\hline 
		1. & $ACLATE$ & $(1,1,\dots 1)'$ \\
		2. & $ SLATE_{\mathcal{J}}$ & $\hat{\mathbf{\lambda}}_S = \mathbbm{1}(\mathcal{J} \cap S \ne \emptyset)\hat{P}\left(Z_{S-\mathcal{J},i}=1\right)$\\
		3. & $ SLATT_{\mathcal{J}}$ & $\hat{\mathbf{\lambda}}_S = \mathbbm{1}(\mathcal{J} \cap S \ne \emptyset)\hat{P}\left(Z_{S,i}=1\right)$\\
		4. & $ SLATU_{\mathcal{J}}$ & $\hat{\mathbf{\lambda}}_S = \mathbbm{1}(\mathcal{J} \cap S \ne \emptyset)\hat{P}\left(Z_{S-\mathcal{J},i}(1-Z_{\mathcal{J},i})=1\right)$\\
		5. &$PTE_j(\mathbf{z}_{-j})$ & $\hat{\mathbf{\lambda}}_S = \mathbbm{1}(S=\mathbf{z}_{-j,1} \cup \{j\})$
	\end{tabular} \\ \vspace{.5cm}
	\caption{Leading examples of $\hat{\mathbf{\lambda}}$, where to ease notation I index element $g \in \mathcal{G}^c$ of $\hat{\lambda}$ by its associated set $S=S(g)$. Here $Z_{S,i} := \prod_{j \in S} Z_{ji}$, $S-\mathcal{J}$ denotes the set difference $\{j: j \in S, j \notin \mathcal{J}\}$, $\mathbf{z}_{-j,1}$ denotes the set of instruments that are equal to one in $\mathbf{z}_{-j}$, and $\hat{P}(E_i) = n^{-1} \sum_{i=1}^n \mathbbm{1}(E_i)$ for any event $E_i$.} \label{tablelatesest}
\end{table}

\noindent \textit{Comparison with 2SLS:} The estimator $\hat{\Delta}_c$ has a similar form to a ``fully-saturated'' 2SLS estimator that includes an indicator for each value of $Z_i$ in the first stage. Indeed, this version of 2SLS can be written in the form $\hat{\rho}(\mathbf{\lambda})$ where the components of $\mathbf{\lambda}$ are sample covariances between $D_i$ and a given component of $\Gamma_i$, corresponding to the estimand: $\rho_{2sls}= \sum_{g \in \mathcal{G}^c} \frac{ P(G_i=g)\cdot Cov(D_i,D_g(Z_i))}{\sum_{g'}  P(G_i=g')\cdot Cov(D_i,D_{g'}(Z_i))}\cdot \Delta_g$. The weights that 2SLS uses to aggregate over linear projection coefficients $(\Gamma'\Gamma)^{-1}\Gamma'D$ and $(\Gamma'\Gamma)^{-1}\Gamma'Y$ are thus determined asymptotically by the joint distribution of $D_i$ and $Z_i$, which the researcher has no control over. MTW show that the implied weight on some $\Delta_g$ under VM may be negative, depending on the DGP. By contrast, $\hat{\Delta}_c$ uses $\hat{\lambda}$ chosen to match the desired parameter of interest, guaranteeing that the estimator recovers a well-defined causal parameter under VM. Even with a large number of instruments, $\hat{\Delta}_c$ is no more ``expensive'' than 2SLS: both involve computing two linear projections each with $2^J$ of terms (despite the fact that number of underlying selection groups is much larger under VM compared with IAM).\\ 

\noindent \textit{Estimation of the ACLATE from a single Wald ratio:} The population estimand corresponding to the all-compliers LATE takes on a particularly simple form, a single ``Wald ratio'':
\begin{equation} \label{waldeq}
ACLATE := \frac{\mathbbm{E}[Y_i|Z_i=\bar{Z}] - \mathbbm{E}[Y_i|Z_i=\underbar{Z}]}{\mathbbm{E}[D_i|Z_i=\bar{Z}] - \mathbbm{E}[D_i|Z_i=\underbar{Z}]}
\end{equation}
where $\bar{Z} = (1\dots1)'$ and $\underbar{Z} = (0\dots0)'$, provided that $P(Z_i=\bar{Z})>0$ and $P(Z_i=\underbar{Z})>0$, and the denominator is non-zero. This can be shown via a Corollary to Theorem \ref{thmid} presented in Appendix \ref{withcovariates}. By (\ref{waldeq}), a very simple consistent estimator of the ACLATE is thus: $\widehat{ACLATE} := \frac{\hat{\mathbbm{E}}[Y_i|Z_i=\bar{Z}]-\hat{\mathbbm{E}}[Y_i|Z_i=\underline{Z}]}{\hat{\mathbbm{E}}[D_i|Z_i=\bar{Z}]-\hat{\mathbbm{E}}[D_i|Z_i=\underline{Z}]}$. It turns out that $\widehat{ACLATE}$ is in fact numerically equivalent in finite sample to $\hat{\Delta}_c=\hat{\rho}((1\dots 1)')$ obtained via Eq. (\ref{rhonoreg}).\footnote{To see this, note that the vector $H$ of $H_i$ solves the system of equations $\Gamma' H_i = (1 \dots 1)'$. Among vectors that are in the column space of $\Gamma$, $H$ is the unique such solution, given that the design matrix $\Gamma$ has full column rank. One can readily verify that $\Gamma' H = (1\dots 1)$ with the choice $H_i = \frac{\mathbbm{1}(Z_i=(1\dots1))}{\hat{P}(Z_i=(0\dots0))} -\frac{\mathbbm{1}(Z_i=(0\dots0))}{\hat{P}(Z_i=(0\dots0))}$, and that this $H = \Gamma \eta$ with $\eta=(1/\hat{P}(Z_i=(1\dots1)), 0, \dots 0, -1/\hat{P}(Z_i=(0\dots0)))$.} In situations where $Z_i$ has non-zero but small probability for the points $\bar{Z}$ and $\underbar{Z}$, we may thus expect that $\hat{\Delta}_c$ may perform poorly as an estimator of ACLATE in small samples, since it effectively ignores all of the data for which $Z_i \notin \{\underline{Z},\bar{Z}\}$. This issue also arises in the context of IAM \citep{Frolich2007}, in which case $\hat{\rho}_{\bar{Z}, \underline{Z}}$ is also consistent for the ACLATE with finite $\mathcal{Z}$.\footnote{\label{iamfn}An analogous result to Eq. (\ref{waldeq}) holds under IAM with finite instruments, where in that case we take any $\bar{Z} \in \textrm{argmax}_{z}\mathbbm{E}[D_i|Z_i=\mathbf{z}]$ and $\underbar{Z} \in \textrm{argmin}_{z}\mathbbm{E}[D_i|Z_i=\mathbf{z}]$, and define $\mathcal{G}^c:= \{g \in \mathcal{G}: \mathbbm{E}[D_g(Z_i)] \in (0,1)\}$.} Regularization of the estimator to make use of other points in the support of $Z_i$---at the expense of some finite-sample bias---may be useful in improving performance in such cases.\\

\noindent \textit{Covariates:} In Appendix \ref{withcovariates}, I describe how covariates can be accommodated in estimation when instrument independence holds only after conditioning on observed variables $X$. The main result is that while conditional average treatment effects $\Delta_c(x):=\mathbbm{E}[Y_i(1)-Y_i(0)|C_i=c,X_i=x]$ can be identified for each $x$ in the support of $X_i$, the unconditional $\Delta_c$ can be easier to estimate. A particularly simple case occurs when the conditional expectation functions $\mathbbm{E}[Y_i|Z_i=\mathbf{z},X_i=x]$ and $\mathbbm{E}[D_i|Z_i=\mathbf{z},X_i=x]$ are each additively separable between $\mathbf{z}$ and $x$, and linear in $x$. A simple consistent estimator of $\Delta_c$ is then: $ \left((0,\hat{\mathbf{\lambda}}')(\Gamma'\mathcal{M}_X\Gamma)^{-1}\Gamma'\mathcal{M}_XD\right)^{-1}(0,\hat{\mathbf{\lambda}}')(\Gamma'\mathcal{M}_X\Gamma)^{-1}\Gamma'\mathcal{M}_XY$, where $\mathcal{M}_X$ is an orthogonal projection matrix for observations of $X_i$. In this case, the only modification to $\hat{\Delta}_c$ required is to add $X_i$ as additional regressors to the linear projections of $Y_i$ and $D_i$ onto the instruments $\Gamma_i$. I implement this estimator in the empirical application below.

\section{Empirical application} \label{empiricalapp}

In this section I apply the results of this paper to a well-known setting in which multiple instruments have been used: the labor market returns to college. While most existing literature has in this setting bases IV methods on the traditional IAM notion of monotonicity (or on an assumption of homogeneous treatment effects), I instead base estimates on the identification results of this paper that hold under VM. This approach reveals new evidence of heterogeneity in treatment effects across groups that differ in their counterfactual selection behavior, under more plausible assumptions. 

I use the dataset from \citet*{Carneiro2011} (henceforth CHV) constructed from the 1979 National Longitudinal Survey of Youth. This setting is also considered by MTW (under assumption of PM). The sample consists of 1,747 white males in the U.S., first interviewed in 1979 at ages that ranged from 14 to 22, and then again annually. The outcome of interest $Y_i$ is the log of individual $i$'s wage in 1991, and treatment $D_i=1$ indicates $i$ attended at least some college. As in CHV, treatment effects are expressed in approximate per-year equivalents by dividing them by four.

CHV consider four separate instruments for schooling. In a baseline setup, I use the two binary instruments discussed throughout this paper: tuition and proximity. In particular, I let $Z_{1i}$ indicate that average tuition rates local to $i$'s residence around age 17 falls below the sample median, which corresponds to about \$2,170 in 1993 dollars. I let $Z_{2i}$ indicate the presence of a public college in $i$'s county of residence at age 14. I later add two additional instruments used by CHV, which capture local labor market conditions when a student is in high school.

While VM is a natural assumption for the tuition and proximity instruments, a conditional version of instrument validity is more plausible than Assumption 1. I follow CHV and include a set of control variables $X_i$,\footnote{In particular, a student's corrected Armed Forces Qualification Test score, mother's years of education, number of siblings, ``permanent'' local earnings in county of residence at 17, ``permanent'' unemployment in county of residence at 17, earnings in county of residence in 1991, and unemployment in state of residence in 1991, along with an indicator for urban residence at 17 and cohort dummies (see CHV for variable definitions and construction). Also following CHV, I include as components of $X_i$ the squares of continuous control variables. All together, these represent the union of variables that CHV use in their first stage and outcome equation, with one exception: As MTW do, I drop years of experience in 1991 since it may itself be affected by schooling. In the two instrument setup, I also add to $X_i$ the two ``unused'' instruments from CHV and their squares: long-run local earnings in county of residence at 17 and long run unemployment in state of residence at 17.} implemented as described in Section \ref{est} and Appendix \ref{withcovariates}. Standard errors are computed by applying the delta method to the system of estimated regression equations (allowing for heteroscedasticity and cross-correlation between the equations).
 
\subsection{Results from baseline setup with two instruments}
The left panel of Table \ref{baselineFS} reports a cross tabulation of the two instruments, which have a weak positive correlation, though the observations are fairly evenly distributed across the four cells. 

\begin{table}[htp!]
	\quad \quad \quad \quad
	\begin{tabular}{cccc}
		\multicolumn{4}{c}{\underline{Distribution of the instruments}} \\ \\
		& & \multicolumn{2}{c}{$Z_2$=``close''} \\
		\multirow{3}{*}{\vspace{-.5cm}$Z_1$=``cheap''} &  & \textbf{0} & \textbf{1} \\ \cline{3-4} 
		& \multicolumn{1}{l|}{\textbf{0}} & \multicolumn{1}{c|}{469} & \multicolumn{1}{c|}{401} \\ \cline{3-4} 
		& \multicolumn{1}{l|}{\textbf{1}} & \multicolumn{1}{c|}{361} & \multicolumn{1}{c|}{516} \\ \cline{3-4} 
	\end{tabular}
	\vspace{.5cm}
	\quad \quad \quad \quad
	\begin{tabular}{cccc}
		\multicolumn{4}{c}{\underline{Mean fitted propensity scores}} \\
		\\
		& & \multicolumn{2}{c}{$Z_2$} \\
		\multirow{3}{*}{\vspace{-.5cm}$Z_1$} &  & \textbf{far} & \textbf{close} \\ \cline{3-4} 
		& \multicolumn{1}{l|}{\textbf{expensive}} &  \multicolumn{1}{c|}{0.451} & \multicolumn{1}{c|}{0.509} \\ \cline{3-4} 
		& \multicolumn{1}{l|}{\textbf{cheap}} & \multicolumn{1}{c|}{0.487} & \multicolumn{1}{c|}{0.530} \\ \cline{3-4} 
	\end{tabular}
	\vspace{.5cm}
	\caption{Left: number of observations having each pair of values of the instruments, with total sample size $N=1,747$. Right: fitted propensity scores estimated by OLS, evaluated at the sample mean of the $X_i$ variables.} \label{baselineFS}
\end{table}

The right panel of Table \ref{baselineFS} reports predictions from the estimated conditional propensity score function $\mathbbm{E}[D_i|Z_i=\mathbf{z}, X_i=x]$ estimated via a linear regression of $D_i$ on the instruments (and their interaction) as well as $X_i$, then evaluated at the mean $\bar{x}$ of $X_i$.\footnote{I note that when all controls are omitted from this regression, the estimated propensity score function is no longer monotonic in $Z_1$ and $Z_2$. This underscores the potential of VM to be used to evaluate the validity of instruments given a set of conditioning variables (in contrast to PM and IAM, who lack this testable implication).}

The top-left value of $\hat{\mathcal{P}}(expensive, far,\bar{x}) = 45.1\%$ provides an estimate of the overall proportion of always-takers in the population, while the share of never-takers is estimated to be $1-0.53=47.0\%$. The remaining roughly $8\%$ of the population are generalized ``compliers'' consisting of the tuition ($Z_1$), proximity ($Z_2$), eager and reluctant compliers. From the table we can also see that $P(D_i(expensive,close,x)>D_i(expensive,far,x))$ is estimated to be $5.7\%$, and $P(D_i(cheap,far,x)>D_i(expensive,far,x))$ is estimated to be $3.6\%$ (these quantities are the same for all values of $x$, under the maintained assumption that $\mathbbm{E}[D_i|Z_i,X_i]$ is additively separable between $Z_i$ and $X_i$). Combining these figures and the response group definitions from Section \ref{comono}, we see that between 1.5\% and 3.6\% of the population are estimated to be eager compliers, while no more than 2.1\% are reluctant compliers. Similarly, no more than 3.6\% are tuition compliers, and between 2.1\% and 5.7\% are proximity compliers. 

Figure \ref{estimates_2iv} reports estimates of several of the parameters introduced in Section \ref{idsec}, alongside fully-saturated 2SLS for comparison. Consider first the ACLATE: the point estimate of $0.14$ indicates that having attended a year of college increases 1991 wages of all compliers by roughly 14\% on average. This estimate is within the range of roughly $-0.1$ to $0.3$ of the marginal treatment effect (MTE) function estimated by CHV under the assumption of IAM, and is similar to their point estimate of the average treatment on the treated under a parametric normal selection model. The 2SLS estimate from Figure \ref{estimates_2iv} yields a similar value at $0.12$. Note that given the limited sample size none of the estimates are quite significant at the 90\% level. I focus discussion on the point estimates for the sake of illustration with this caveat.
\begin{figure}[htp!]
	\centering
	\includegraphics[width=4in]{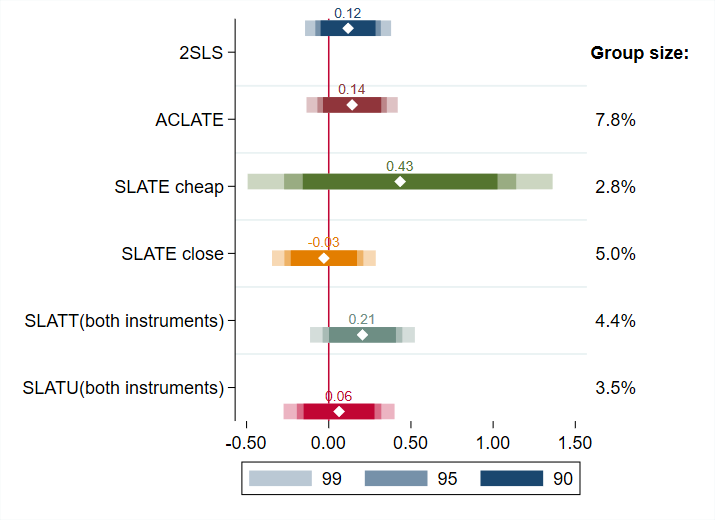} \vspace{.1cm}
	\caption{Estimates of various causal parameters identified under VM with two instruments, alongside fully-saturated 2SLS for comparison. Bars indicate 95\% confidence intervals, and ``Group Size'' refers to the identified quantity $P(C_i=1)$ for each parameter.} \label{estimates_2iv}
\end{figure}

The point estimates from the remaining rows in Figure \ref{estimates_2iv} suggest that the ACLATE aggregates over substantial heterogeneity in the population. For example, the proximity SLATE indicates that a year of college has no average effect on the wages of individuals who move into treatment if and only if a college is nearby, given local affordability. This group includes proximity compliers, eager compliers for whom college is expensive, and reluctant compliers for whom it is cheap. On the other hand, the SLATE for tuition is about three times as large as the ACLATE. These results suggest that the average treatment effect among tuition compliers is larger than it is among proximity compliers, however the sign of the difference is not identified.\footnote{Note however that in the $J=2$ case, if $\Delta_g$ and the corresponding group size $p_g$ is known ex-ante for one group $g \in \mathcal{G}^c$, then the remaining three group specific treatment effects and group sizes can be point identified.} Note finally that the point estimates for $SLATU$ and $SLATT$ suggest that among the compliers averaged over by the ACLATE, those who in fact go to college have greater treatment effects on average than those who do not, which is consistent with students selecting on the basis of their heterogeneous future gains (as in a Roy-type model).

\subsection{Results with all four instruments}
I now add the additional two instruments from CHV, to increase comparability and emphasize the scalability of my method to several instruments. Let $Z_{3i}$ indicate that local earnings in $i$'s county of residence at 17 is below the sample median, and $Z_{4i}$ that unemployment in $i$'s state of residence at 17 is above the sample 25\% percentile (this threshold is chosen as it yields a stronger predictor of college as compared with using the median). The two labor market variables and their squares are removed from the controls $X_i$. 

\begin{figure}[htp!]
	\centering
	\includegraphics[width=4in]{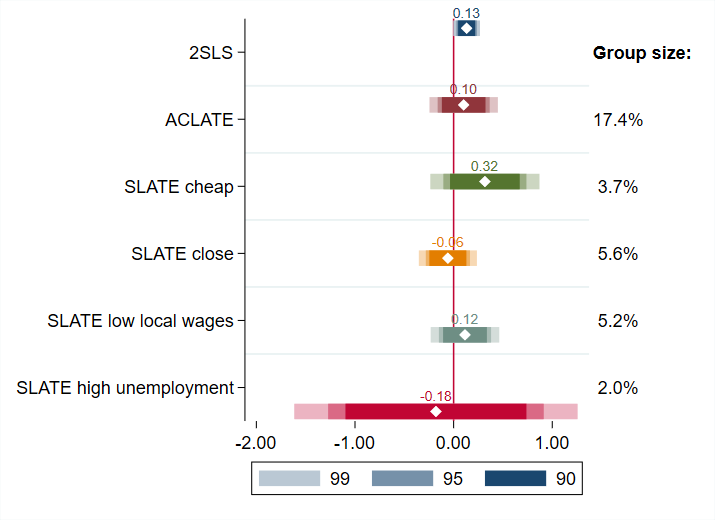} \vspace{.1cm}
	\caption{Estimates of various causal parameters identified under VM with all four instruments, alongside fully-saturated 2SLS for comparison. Bars indicate 95\% confidence intervals, and ``Group Size'' refers to the identified quantity $P(C_i=1)$ for each parameter.} \label{estimates_4iv}
\end{figure}
With all four instruments, over 17\% of the population are now some type of ``complier'' and counted in $\mathcal{G}^c$, which now contains 167 underlying response groups (compared with just 7.8\% of the population for the four groups in $\mathcal{G}^c$ with the two instruments used before). Nevertheless, computing the treatment effect estimates involves regressions with at most 16 terms in addition to the controls, keeping implementation manageable.

Table \ref{estimates_4iv} shows that the $ACLATE$ is not much changed from the case with only two instruments, and we again have that the tuition SLATE is much larger and that the proximity SLATE is close to zero. The SLATE for low local wages occupies an intermediate value, while the SLATE for high unemployment is estimated to be negative (but with a much larger standard error). The unemployment SLATE is so imprecisely estimated in part because its corresponding complier group is the smallest of the estimands considered: with just $2\%$ of the population.

\section{Conclusion}

In this paper, I have characterized the causal parameters that can be point identified using multiple instruments under a monotonicity assumption that is often motivated by economic theory: vector monotonicity (VM). I accomplish this by focusing on binary instruments, but results are applicable to discrete instruments more generally, as shown in Appendix \ref{alternative}.

The estimator I propose targets well-defined causal-parameters at no additional computational cost relative to the popular 2SLS estimator, which is not guaranteed to recover an interpretable causal parameter under VM. In an application to the labor market returns to college education, I find leveraging VM that underlying groups in the population which exhibit different selection behavior also have very different average returns to college.


\printbibliography

\begin{appendices}
	\section{Identification and estimation with covariates} \label{withcovariates}
In practice, it is often easier to justify a \textit{conditional} version of Assumption 1: 
$$\left\{(Y_i(1), Y_i(0), G_i) \indep Z_i\right\} | X_i$$
in which $X$ is a vector of observed covariates unaffected by treatment. VM is assumed as before, i.e. $D_i(\mathbf{z}) \ge D_i(\mathbf{z}')$ for all $i$, whenever $\mathbf{z} \ge\mathbf{z}'$ componentwise. This implies that we may continue to take the ``1'' value of each (binary) instrument to be the direction in which potential treatments are increasing, regardless of the value of $X_i$. This section discusses how one can accommodate these covariates when estimating causal effects.

If full instrument support (Assumption 3) holds conditional $X_i=x$ for each $x$, then Theorem \ref{thmid} implies that we can identify $\Delta_c(x):=\mathbbm{E}[\Delta_i|C_i=1, X_i=x]$ for $\Delta_c$ satisfying Property M, from the distribution of $(Y_i,Z_i,D_i)|X_i=x$. The function $h$ from Theorem \ref{thmid} must now depend as well on the conditioning value $x$: $h(Z_i,x) =\mathbf{\lambda}(x)' Var(\Gamma_i|X_i=x)^{-1}\left(\Gamma_i - \mathbbm{E}[\Gamma_i|X_i=x]\right)$, where we define $\mathbf{\lambda}_g(x) := P(C_i=1|G_i=g,X_i=x)=\mathbbm{E}[c(g,Z_i)|X_i=x]$ for each simple response group $g \in \mathcal{G}^s$. Note that $\lambda_g(x)$ is identified (just as $\lambda_g$ was in the unconditional case) from the data given the known function $c$. Theorem \ref{thmid} applied to the conditional distribution of observables $(Y_i,D_i,Z_i)$ given $X_i=x$ implies that $\Delta_c(x)= \mathbbm{E}[h(Z_i,x)Y_i|X_i=x]/\mathbbm{E}[h(Z_i,x)D_i|X_i=x]$. 

If the support of $X_i$ reflects a small number of discrete values, it might be feasible to repeat the entire estimation on fixed-covariate subsamples, to estimate $\Delta_c(x)$ for each $x$. If $X_i$ includes continuous variables, estimation of $\Delta_c(x)$ can in principle be implemented by nonparametric regression of each component of $\Gamma_i$ on $X_i$ as well as nonparametrically estimating the conditional variance matrix $Var(\Gamma_i|X_i=x)$ (\citet{Yin2010} describe a kernel-based method for this). The vector $\mathbf{\lambda}(x)$ could also be computed via nonparametric regression of $c(g,Z_i)$ on $X_i$ for each $g \in \mathcal{G}^s$.

However, when the object of interest is the \textit{unconditional} version of $\Delta_c$, the conditional quantities become nuisance parameters. Notably, they can be integrated over separately in the numerator and the denominator of the above expression for $\Delta_c(x)$. To see this, note that:
\begin{align*}
	\Delta_c &= \int dF_{X|C=1}(x)\cdot\frac{\mathbbm{E}[h(Z_i,x)Y_i|X_i=x]}{\mathbbm{E}[h(Z_i,x)D_i|X_i=x]}=  \int dF_{X|C=1}(x)\cdot \frac{\mathbbm{E}[h(Z_i,x)Y_i|X_i=x]}{P(C_i=1|X_i=x)}\\
	&= \frac{1}{P(C_i=1)}\int dF_{X}(x)\cdot \mathbbm{E}[h(Z_i,X_i)Y_i|X_i=x] = \frac{\mathbbm{E}[h(Z_i,X_i)Y_i]}{\mathbbm{E}[h(Z_i,X_i)D_i]}
\end{align*}
applying Bayes' rule and using that $P(C_i=1|X_i=x) = \mathbbm{E}[h(Z_i,x)D_i|X_i=x]$. The above expression provides a vector monotonicity analog to a similar result that holds under IAM \citep{Frolich2007}, and suggests an alternative to integrating over conditional estimates $\hat{\Delta}_c(x)$ which may be attractive when $X$ has continuous components or otherwise takes a large number of values.

In the unconditional case, the following corollary to Theorem \ref{thmid} expresses $\Delta_c$ in terms of conditional expectation functions of $Y_i$ and $D_i$ on the instruments (see the proof of Theorem \ref{thmid}):
\begin{corollary*}[to Theorem \ref{thmid}] \label{cefcorr}
	Under the Assumptions of Theorem \ref{thmid}:
	$$\Delta_c = \frac{\sum_{\mathbf{z} \in \mathcal{Z}} \left(\sum_{g \in \mathcal{G}^s}\lambda_g \mathcal{A}_{g\mathbf{z}}\right)\mathbbm{E}[Y_i|Z_i=\mathbf{z}]}{\sum_{\mathbf{z} \in \mathcal{Z}}\left(\sum_{g \in \mathcal{G}^s}\lambda_g \mathcal{A}_{g\mathbf{z}}\right)\mathbbm{E}[D_i|Z_i=\mathbf{z}]}$$
	where $\lambda_g$ is as defined in Theorem \ref{thmid} and $\mathcal{A}_{g\mathbf{z}} = \mathbbm{1}(\mathbf{z}_1 \subseteq S(g))\cdot (-1)^{|S(g)-\mathbf{z}_1|}$, with $(\mathbf{z}_1,\mathbf{z}_0)$ a partition of the indices $j \in \{1 \dots J\}$ that take a value of zero or one in $\mathbf{z}$, respectively.	
\end{corollary*}
\noindent In a setting with covariates, the expectations condition on $X_i$ as well and we have instead that:
\begin{align*}
	\Delta_c &= \frac{\mathbbm{E}[\tilde{\mathbf{\lambda}}(X_i)'\mathcal{A}\left\{\mathbbm{E}[Y_i|Z_i= \mathbf{z}, X_i]\right\}]}{\mathbbm{E}[\tilde{\mathbf{\lambda}}(X_i)'\mathcal{A}\left\{\mathbbm{E}[D_i|Z_i= \mathbf{z}, X_i]\right\}]}
\end{align*}
where $\mathcal{A}$ is a $2^J \times 2^J$ matrix defined from the entries $\mathcal{A}_{g\mathbf{z}}$ above, $\tilde{\mathbf{\lambda}}(x)$ is a vector over $\{a.t.\} \cup \mathcal{G}^s$ with components $\lambda_g(x)$ for $g \in \mathcal{G}^s$ and $0$ for the always-takers, and $\left\{\cdot\right\}$ indicates a vector over $\mathbf{z} \in \mathcal{Z}$. If the CEFs of $Y$ and $D$ are both be separable between $Z$ and $X$, i.e $\mathbbm{E}[Y_i|Z_i= \mathbf{z}, X_i=x] = y(\mathbf{z})+w(x)$ and $\mathbbm{E}[D_i|Z_i= \mathbf{z}, X_i=x] = d(\mathbf{z})+v(x)$, then this expression simplifies to:
\begin{align*}
	\Delta_c &= \frac{\mathbbm{E}[\tilde{\mathbf{\lambda}}(X_i)'\mathcal{A}\left\{y(\mathbf{z})\right\}+w(X_i)\tilde{\mathbf{\lambda}}(X_i)'\mathcal{A} \mathbf{1}]}{\mathbbm{E}[\tilde{\mathbf{\lambda}}(X_i)'\mathcal{A}\left\{d(\mathbf{z})\right\}+v(X_i)\tilde{\mathbf{\lambda}}(X_i)'\mathcal{A} \mathbf{1}]} =\frac{\mathbbm{E}[\tilde{\mathbf{\lambda}}(X_i)]'\mathcal{A}\left\{y(\mathbf{z})\right\}}{\mathbbm{E}[\tilde{\mathbf{\lambda}}(X_i)]'\mathcal{A}\left\{d(\mathbf{z})\right\}}
\end{align*}
where $\mathbf{1}$ is a vector of ones over $\mathbf{z} \in \mathcal{Z}$ and we have used that $\tilde{\mathbf{\lambda}}(x)'\mathcal{A}\mathbf{1}=0$ for any $x$.\footnote{This follows from the definition of the entries: $\mathcal{A}_{g\mathbf{z}} = \mathbbm{1}(\mathbf{z}_1 \subseteq S(g))\cdot (-1)^{|S(g)-\mathbf{z}_1|}$ where $\mathbf{z}_1$ is the set of components of $\mathbf{z}$ that are equal to one. The identity $\sum_{S' \subseteq S} (-1)^{|S'|} = \mathbbm{1}(S = \emptyset)$ for any set $S$ implies that for any $g \in \mathcal{G}^s$ with $S(g)=S$, $[\mathcal{A}\mathbf{1}]_g=\sum_{\mathbf{z}:\mathbf{z}_1 \subseteq S} (-1)^{|S-\mathbf{z}_1|} = (-1)^{|S|} \sum_{S' \subseteq S} (-1)^{|S'|} = 0$. The first component of $\mathcal{A}\mathbf{1}$, corresponding to $g=a.t.$, does not contribute to $\tilde{\mathbf{\lambda}}(x)'\mathcal{A}\mathbf{1}$ for any $x$ since the first component of $\tilde{\mathbf{\lambda}}(x)$ is defined to be zero. The same identity used above annihilates all but two components of $\tilde{\mathbf{\lambda}'}\mathcal{A}$ in Corollary \ref{cefcorr} in the case of the ACLATE, yielding Eq. (\ref{waldeq}).} Note that $\mathbbm{E}[c(g,Z_i)]=\mathbbm{E}[\mathbbm{E}[c(g,Z_i)|X_i]]=\mathbbm{E}[\lambda_g(X_i)]$. Thus we can write $\Delta_c = \frac{\tilde{\mathbf{\lambda}}'\mathcal{A}\left\{y(\mathbf{z})\right\}}{\tilde{\mathbf{\lambda}}'\mathcal{A}\left\{d(\mathbf{z})\right\}}$, where $\tilde{\lambda}$ is constructed using the unconditional quantities $\lambda_g:=\mathbbm{E}[c(g,Z_i)]$. 

Given consistent estimators $\hat{y}(\mathbf{z})$ and $\hat{d}(\mathbf{z})$ of the functions $y(\mathbf{z})$ and $d(\mathbf{z})$, we can estimate $\Delta_c$ by $\hat{\Delta}_c = \frac{\sum_{\mathbf{z} \in \mathcal{Z}} \left(\sum_{g \in \mathcal{G}^s}\hat{\mathbf{\lambda}}_g \mathcal{A}_{g\mathbf{z}}\right)\hat{y}(\mathbf{z})}{\sum_{\mathbf{z} \in \mathcal{Z}}\left(\sum_{g \in \mathcal{G}^s}\hat{\mathbf{\lambda}}_g \mathcal{A}_{g\mathbf{z}}\right)\hat{d}(\mathbf{z})}$ using the unconditional estimators $\hat{\mathbf{\lambda}}_g$ from Table \ref{tablelatesest}. For example, one can use OLS regressions on $Z_i$ and $X_i$ when the functions $y, d, w$ and $v$ are all linear in their arguments. The estimates reported in the empirical application of Section \ref{empiricalapp} use this result, with $w(x)$ and $v(x)$ taken to each be linear in $x$. Note that since the vector $\Gamma_i$ contains a full set of interactions between the binary instruments, both $y(\mathbf{z})$ and $d(\mathbf{z})$ are automatically linear in $\Gamma_i$. When the functions $w(x)$ and $v(x)$ are also linear in $x$, $\hat{\Delta}_c = \left((0,\hat{\mathbf{\lambda}}')(\Gamma'\mathcal{M}_X\Gamma)^{-1}\Gamma'\mathcal{M}_XD\right)^{-1}(0,\hat{\mathbf{\lambda}}')(\Gamma'\mathcal{M}_X\Gamma)^{-1}\Gamma'\mathcal{M}_XY$ yields a consistent estimator of $\Delta_c(x)$, where $\mathcal{M}_X$ is the orthogonal projection matrix for the design matrix of $X_i$, composed of $n$ observations of $X_i$ arranged as rows. Comparing with $\hat{\rho}(\hat{\mathbf{\lambda}})$ from Section \ref{est}, the inclusion of $\mathcal{M}_X$ simply residualizes the $\Gamma_i$ with respect to their linear projection on $X_i$. In practice, the only change required to accommodate covariates in this case is to augment the linear projections of $Y_i$ and $D_i$ onto the instruments with $X_i$ as additional linear regressors.

\section{Identification and estimation without rectangular support} \label{alternative}
	This section provides an extension of Theorem 1 for cases when the support $\mathcal{Z}$ of the instruments is not rectangular (i.e. $\mathbb{S}_Z \ne \mathcal{Z}_1 \times \mathcal{Z}_2 \times \dots \times \mathcal{Z}_J$), and there may be perfect linear dependencies between the instruments.
	
	One instance in which this extension might be applied is if one begins with instruments that are not binary. The following Proposition shows that if one starts with finite discrete instruments satisfying vector monotonicity, these discrete instruments can be re-expressed as a larger number of binary instruments in a way that preserves VM (while preserving all information about the value of $Z_i$): 
	\begin{proposition} \label{propdtob}
		Let $Z_1$ be discrete with $M+1$ ordered points of support $z_0 < z_1 < \dots < z_M$, and $Z_2 \dots Z_J$ be other instruments. Define $\tilde{Z}_{mi} := \mathbbm{1}(Z_{1i} \ge z_m)$. If the vector $Z=(Z_1, \dots Z_J)$ satisfies Assumption VM on a non-disjoint $\mathcal{Z}$ then so does the vector $(\tilde{Z}_1, \dots ,\tilde{Z}_{M}, Z_2, \dots Z_J)$.
	\end{proposition}
	\begin{proof}
	See Online Appendix.
	\end{proof}
	 \noindent Applying Proposition \ref{propdtob} iteratively allows one to begin with discrete instruments in a given empirical setting, and then replace them with a set of binary instruments that still satisfy VM. This is done by introducing one binary instrument per value for any instrument that was initially discrete, omitting the lowest value for each initial instrument.

	However, the construction of Proposition \ref{propdtob} does imply that the support $\mathbb{S}_Z$ of the instruments will not be rectangular, violating an assumption of Theorem \ref{thmid}. Let $\tilde{\mathcal{Z}}$ be the set of values the instruments can take after the transformation of Proposition \ref{propdtob} is applied to e.g. $Z_1 \in \{0,1,2\}$ and $Z_2\in\{0,1\}$. Then with the new set of instruments $(\tilde{Z}_2,\tilde{Z}_3,Z_2)$, we cannot have e.g. $(0,1,0) \in \tilde{\mathcal{Z}}$ or $(0,1,1) \in \tilde{\mathcal{Z}}$ because this would require both $\mathbbm{1}(Z_{1i} \ge 2)=1$ and $\mathbbm{1}(Z_{1i} \ge 1)=0$.\\
	
	\noindent A weaker version of Assumption 3 that allows for such non-rectangular support among binary instruments consists of the following two conditions. Define $Z_{Si}=\prod_{j\in S} Z_{ji}$ where we let $Z_{\emptyset i}:=1$.
	\begin{assumption*} [3a* (existence of instruments)] 
		There exists a family $\mathcal{F}$ of subsets of the instruments $S \subseteq \{1 \dots J\}$, where $\emptyset \in \mathcal{F}$ and $|\mathcal{F}|>1$, such that random variables $Z_{Si}$ for all $S \in \mathcal{F}$ are linearly independent, i.e. $P\left(\sum_{S \in \mathcal{F}} \omega_S \cdot Z_{Si}=0\right)<1$ for all vectors $\mathbf{\omega} \in \mathbbm{R}^{|\mathcal{F}|}/\mathbf{0}$, where $\mathbf{0}$ denotes the zero vector in $\mathbbm{R}^{|\mathcal{F}|}$.
	\end{assumption*}

	\begin{assumption*} [3b* (non-redundant sets of instruments generate the response groups)]
	There exists a family $\mathcal{F}$ satisfying Assumption 3a*, such that for any $S \notin \mathcal{F}$, $g(F) \notin \mathcal{G}$ for all Sperner families $F$ that contain $S$. 
	\end{assumption*}
	\noindent Assumption 3a* alone is very weak, and is satisfied whenever there exists \textit{some} product of the instruments that has strictly positive variance. Assumption 3b* is much more restrictive: it implies that selection functions $D_g(\cdot)$ for all response groups $g \in \mathcal{G}^c$ can be generated from those linearly independent simple selection groups $D_g(\cdot)$ for which $S(g) \in \mathcal{F}$. Assumption 3 corresponds to the special case in which $\mathcal{F}$ is the family of \textit{all} $2^J$ subsets of $J$ binary instruments.
		
	The following Proposition shows that the construction in Proposition \ref{propdtob} mapping discrete instruments to binary instruments yields a case where both parts of Assumption 3* hold, if the original discrete instruments have rectangular support:	
	\begin{proposition} \label{ass3star}
		Let each original instrument $Z_j$ have $M_j+1$ ordered points of support $\mathcal{Z}_j = \{z^j_0,z^j_1,\dots z^j_{M_j}\}$, where $z^j_0 < z^j_1 \dots < z^j_{M_j}$. Define $\tilde{Z}^j_{m} = \mathbbm{1}(Z_{ji} \ge z^j_m)$ and $\mathcal{T}:=\{(j,m)\}_{\substack{j \in \{1\dots J\}\\ m = 1 \dots M_j}}$. If the support of the original discrete instruments is rectangular, i.e. $\mathbb{S}_Z = (\mathcal{Z}_1 \times \mathcal{Z}_2 \times \dots \times \mathcal{Z}_J)$, then Assumption 3* holds with $\mathcal{F}$ the family of all subsets $S$ of $\mathcal{T}$ built as follows: for each $j \in \{1 \dots J\}$, $S$ contains either $(j,m)$ for no values $m$ or all $(j,m)$ for $m$ between $1$ and $m_j$.
	\end{proposition}
	\begin{proof}
		See Online Appendix.
	\end{proof}
	\noindent To interpret the notation of Proposition \ref{ass3star}, note that our ``effective'' binary instruments $\tilde{Z}_{m}^j$ obtained after applying Proposition \ref{propdtob} are indexed by pairs $(j,m)$. One can read $(j,m) \in S$ as saying that the set $S$ ``contains'' the binary instrument $\tilde{Z}_{m}^j$. Consider e.g. a case in which a discrete instrument $Z_1$ has three levels $\{0,1,2\}$ and instruments $2$ to $J$ are each already binary. Proposition \ref{propdtob} shows that if $Z_1 \dots Z_J$ satisfies VM then so does the set of $J+1$ instruments $\tilde{Z}_1, \tilde{Z}_2, Z_2, \dots Z_J$ where $\tilde{Z}_1 = \mathbbm{1}(Z_{1i} \ge 1)$ and $\tilde{Z}_2 = \mathbbm{1}(Z_{1i} \ge 2)$. In this example, the family $\mathcal{F}$ from Proposition \ref{ass3star} would correspond to all subsets of $\{\tilde{Z}_1, \tilde{Z}_2, Z_2, \dots Z_J\}$ that do not contain $\tilde{Z}_2$ without also containing $\tilde{Z}_1$.\footnote{If instead we used the full powerset $\mathcal{F} = 2^{\{1 \dots J\}}$ there would be $2^{J-1}$ ``redundant'' simple response groups in the vector $\Gamma_i = \{\Gamma_{Si}\}_{S \in \mathcal{F}}$, since for any $S \subseteq \{2\dots J\}$: $\tilde{Z}_{2i} \tilde{Z}_{3i} Z_{Si} = \tilde{Z}_{3i}Z_{Si}$.} Intuitively, an element of $\mathcal{F}$ amounts to choosing for each instrument exactly one of its values $m_j$.\footnote{For example, if $J=3$, the subset $\{\tilde{Z}_2,Z_3\}$ (i.e. $S=\{(1,1),(1,2),(3,1)\}$ in the notation of Proposition \ref{ass3star}) would correspond to $Z_1=2$, $Z_2=0$, and $Z_3=1$. There exists an isomorphism between $\mathcal{F}$ and all combinations $(m_1, m_2 \dots m_J)$ of values of the original instruments, as explained in the proof of Proposition \ref{ass3star}.} Provided rectangular support on the original instruments, Assumption 3* then follows for $\mathcal{F}$ constructed in this way, by Proposition \ref{ass3star}.
	
    Given Assumption 3*, Theorem \ref{thmid} generalizes as follows:
	\begin{theorem*}[1*] \label{thmidstar}
		The results of Theorem \ref{thmid} hold under Assumption 3* replacing Assumption 3, where now $\Gamma_i := \{Z_{Si}\}_{S \in \mathcal{F}, S \ne \emptyset}$, $\mathbf{\lambda}:= \{\mathbbm{E}[c(g(S),Z_i)]\}_{S \in \mathcal{F}, S \ne \emptyset}$.
	\end{theorem*}
	\begin{proof}
		Theorem 1* is established in the main proof of Theorem \ref{thmid}, but some steps require more involved calculations under Assumption 3*, with details given in the Online Appendix.
	\end{proof}
	\noindent Theorem 1* may be useful when discrete instruments are mapped to binary instruments as in Proposition \ref{propdtob}, but also in other cases in which the practitioner has auxiliary knowledge that some of the response groups are not present in the population, or are ruled out on conceptual grounds.\footnote{Note that a parameter $\Delta_c$ that satisfies Property M when $\mathcal{Z}$ is rectangular (such as the ACLATE) may violate Property M when $\mathcal{Z}$ is not (to verify Property M in a given empirical context, Proposition \ref{propmsuff} may be useful). Further, a causal parameter is only well-defined under Property 3* if $c(g(F),\mathbf{z})=0$ for all $\mathbf{z} \in \mathcal{Z}$, for any $F$ that contains $S \notin \mathcal{F}$. That is, the function $c$ cannot place weight on groups that are assumed not to exist.}  When it comes to estimation, the matrix $\Gamma$ from Section \ref{est} can be defined from $Z_{Si}$ using only the sets $S$ within $\mathcal{F}$ (c.f. footnote \ref{fn:equivalently}), and similarly for $\hat{\lambda}$ as a vector with $|\mathcal{F}|-1$ components.

	\section{Comparison with the identification approach of MTW2} \label{sec:compare}

This section compares the point identification results of this paper to the approach to identification proposed by \citet{Mogstad2020b} (MTW2). While MTW2's method is applicable more generally under PM, I focus here on the application of their method when the additional restriction of VM holds. For comparison with my Theorems \ref{thmid} and \ref{necctheorem}, I also assume that the instruments are binary with full support (with the VM order $\geq_j$ for each instrument taken to be the standard order on the real numbers). For simplicity, I focus on treatment effect parameters of the form $\Delta_c$ in this section, rather than individual counterfactual means $\mu_c^d$.

The main result of this section is that in such a setting treatment effect parameters of the form $\Delta_c = \mathbbm{E}[Y_i(1)-Y_i(0)|c(G_i,Z_i)=1]$ are point identified by my Theorem \ref{thmid} if and only if they are point identified by the approach of MTW2, when the approach of MTW2 is employed with a ``full'' set of identifying moments (and no additional identifying assumptions). My results can therefore be interpreted as providing characterizing exactly \textit{which} treatment effect parameters are point-identified under the MTW2 approach,\footnote{However, I note that some parameters that are identified under VM---for example the ACLATE---are arguably more natural to define using the framework of the present paper, which defines target parameters in terms of the full selection groups $G_i$, rather than from single-instrument marginal response (MTR) functions as MTW2 do.} My Theorem \ref{thmid} also yields a constructive estimand for identified parameters that affords simple estimation and statistical inference, while also giving the researcher knowledge the parameter is point-identified, ex-ante (before seeing the data).

The results of this section also offer a partial answer to a question left as an open one by MTW2: whether the identified sets delivered by their method are sharp. With binary instruments satisfying VM, I find that the approach of MTW2 does deliver sharp identified sets when the target parameter satisfies Property M. However, their approach can return empty identified sets when it is used to impose additional assumptions regarding marginal treatment response (MTR) functions that turn out to be incompatible with the data.

\subsection{Identification in MTW2}

The approach to identification used by MTW2 builds upon the idea of \textit{IV-like estimands} of \citet*{Mogstad2018}. For any known measurable function $s(d,\mathbf{z})$, the quantity $\beta_s = \mathbbm{E}[s(D_i,Zi)Y_i]$ is identified from the data and is referred to as an \textit{IV-like estimand}. Let $\mathcal{S}$ denote a collection of IV-like estimands $\{\beta_s\}_{s \in \mathcal{S}}$. 

Given a set $\mathcal{S}$ of IV-like estimands to be used for identification, the identified set proposed by MTW2 for a parameter of interest $\beta^*(m)$ is $$B^{MTW}(\mathcal{S}):=\{\beta^*(m): m \in (\mathcal{M} \cap \mathcal{M}^{obs(\mathcal{S})} \cap \mathcal{M}^{lc(\mathcal{S})})\}$$
where $\mathcal{M}$, $\mathcal{M}^{obs(\mathcal{S})}$ and $\mathcal{M}^{lc(\mathcal{S})}$ are each sets of $m$, where $m$ denotes a collection of MTR functions.\footnote{An MTR function is $\mathbbm{E}[Y_i(d)|U_{ji}=u,Z_{-j,i}=\mathbf{z}_{-j}]$ viewed as a function of $u$, and $m$ collects these functions across $d,j$ and $\mathbf{z}_{-j}$. The Online Appendix reviews how the variables $U_{ji}$ are defined by MTW2.} In particular, i) $\mathcal{M}$ is the set of $m$ that comport with any maintained assumptions about the MTR functions (e.g. that they are monotonic, or satisfy other shape constraints such as concavity); ii) $\mathcal{M}^{obs(\mathcal{S})}$ is the set of $m$ that recover the correct values of $\beta_s$ for all $s \in \mathcal{S}$; and iii) $\mathcal{M}^{lc(\mathcal{S})}$ is the set of $m$ that satisfy a condition called ``mutual consistency'' for each $s \in \mathcal{S}$. $B^{MTW}(\mathcal{S})$ is the set of all values that $\beta(m)$ can take among the $m$ that satisfy all three of these conditions. A review of how MTR functions are defined in MTW2 is provided in the Online Appendix.

Define the functions $s_{d,\mathbf{z}}(d',\mathbf{z}') = \mathbbm{1}(d=d',\mathbf{z}=\mathbf{z}')$ and 
and let $\bar{\mathcal{S}}:=\left\{s_{d,\mathbf{z}}\right\}_{d \in \{0,1\},\mathbf{z} \in \mathcal{Z}}$ be the corresponding set of IV-like estimands for these functions. I refer to $\bar{\mathcal{S}}$ as the ``canonical set'' of IV-like estimands. \citet{Mogstad2018} study $\bar{\mathcal{S}}$ under IAM. Proposition \ref{expectid} in Appendix \ref{proofs} shows that a target parameter is point identified from \textit{some} finite set of IV-like estimands and the observable joint distribution $\mathcal{P}_{DZ}$ of $D_i$ and $Z_i$ if and only if it is identified from $\bar{\mathcal{S}}$ and $\mathcal{P}_{DZ}$. $\bar{\mathcal{S}}$ also provides a basis for all IV-like estimands in the sense that for any other measurable function $s(d,\mathbf{z})$: $\beta_s = \sum_{\mathbf{z} \in \mathcal{Z}} \sum_{d =0}^1 P(D_i=d,Z_i=\mathbf{z}) \cdot \beta_{s_{d,\mathbf{z}}}$ where $\beta_{s_{d,\mathbf{z}}} = \mathbbm{E}[s_{d,\mathbf{z}}(D_i,Z_i)Y_i]$.

\subsection{Equivalence under VM for point-identified $\Delta_c$}
The following result concerns the case in which the parameter of interest $\beta^*$ takes the form $\Delta_c$ considered in this paper (defined in terms of the response groups $G_i$ and instruments $Z_i$). It makes use of the notation developed in the proof of Theorem \ref{thmid}. In particular, let $x_{dg} := P(G_i=g)\cdot \mathbbm{E}[Y_i(d)|G_i =g]$, $[A(d)]_{\mathbf{z}g}=\mathbbm{1}(D_g(\mathbf{z})=d)$, and $b_{d\mathbf{z}} = \mathbbm{E}[Y_i\mathbbm{1}(D_i=d)|Z_i=\mathbf{z}]$. As in Theorem \ref{thmid} define matrix $A$ to have components 
$[A(d)]_{\mathbf{z}g}$ for all $d \in \{0,1\}, \mathbf{z} \in \mathbb{S}_Z$, and $g \in \mathcal{G}$, $\mathbf{x}$ be a vector with components $x_{dg}$ across $d \in \{0,1\}, g \in \mathcal{G}$, and $\mathbf{b}$ a vector with components $b_{d\mathbf{z}}$ for all $d \in \{0,1\}, \mathbf{z} \in \mathbb{S}_Z$. Since the matrix $A$ depends on $\mathcal{G}$ and $\mathbb{S}_Z$, the set of response functions, let us for clarity denote as $A^{VM}$ the $2^{J+1} \times 2\cdot Ded_J$ matrix $A$ that applies under VM with $J$ binary instruments and full rectangular support. Although the method of MTW2 assumes only PM and not VM, I will characterize the set $B^{MTW}(\bar{\mathcal{S}})$ in terms of $A^{VM}$, when VM in fact holds.

Finally, note that given Assumption 1 a parameter of the form $\Delta_c$ can be written as $\Delta_c = \mathbf{\theta_c}'\mathbf{x}$, where $\mathbf{\theta_c}$ is a conformable vector with value $(-1)^{d+1}\frac{\mathbbm{E}[c(g,Z_i)]}{\mathbbm{E}[c(G_i,Z_i)]}$ for component $d,g$.

\begin{theorem} \label{thm:equiv}
	Consider a target parameter of the form $\beta^* = \Delta_c$. Under Assumptions 1-3:
	$$B^{MTW}(\bar{\mathcal{S}}) \subseteq \{\mathbf{\theta_c}'\mathbf{x}: A^{VM}\mathbf{x} = \mathbf{b}\}$$
\end{theorem}
\noindent The proof of Theorem \ref{thm:equiv} is given in the Online Appendix. It follows from Theorem \ref{thm:equiv} that if $\{\mathbf{\theta_c}'\mathbf{x}: A^{VM}\mathbf{x} = \mathbf{b}\}$ returns a singleton then $B^{MTW}(\bar{\mathcal{S})}$ is either a singleton or the empty set (corresponding to model misspecification through the restrictions embedded in $\mathcal{M}$). The proof of Theorem \ref{thmid} shows that the set $\{\mathbf{\theta_c}'\mathbf{x}: A^{VM}\mathbf{x} = \mathbf{b}\}$ is a singleton when Property M holds (which given full instrument support implies that $\theta_\mathbf{c}$ lies in the row-space of the matrix $A^{VM}$). Thus any parameter $\Delta_c$ satisfying Property M (and hence identified by Theorem \ref{thmid}) is a parameter for which $B^{MTW}(\bar{\mathcal{S}})$ is either a singleton or empty.

Theorem \ref{necctheorem} of this paper shows that if $\Delta_c$ is point identified from IV-like estimands when one makes only the Theorem \ref{thmid} assumptions, then $\Delta_c$ must satisfy Property M. Thus, combining Theorems 1-3, we see that $B^{MTW}(\bar{\mathcal{S}})$ returns a singleton for a parameter of the form $\Delta_c$ if and only if $\Delta_c$ satisfies Property M (if one implements MTW2's approach without any additional restrictions embedded through $\mathcal{M}$, and making the Theorem 1 assumptions of instrument validity, VM, full support and $P(C_i=1)>0$). If on the other hand $\mathcal{M}$ imposes additional restrictions that are incompatible with the true DGP, then $B^{MTW}(\bar{\mathcal{S}})$ could be empty even if the target parameter satisfies Property M.
	\section{Proofs} \label{proofs}

See the Online Appendix for proofs of Propositions \ref{VMtest}-\ref{propdtob}.

\subsection{Proof of Lemma \ref{lemmaequiv}}
Lemma \ref{lemmaequiv} is a special case of Proposition \ref{ass3star} of Appendix \ref{alternative} when all of the $J$ instruments are binary and we let $\mathcal{F} = 2^{\{1\dots J\}}$ be the full powerset of $\{1,\dots,J\}$. The proof of Proposition \ref{ass3star} is given in the Online Appendix. 

\subsection{Proof of Theorem \ref{thmid}}
Note that any measurable function $f(Y)$ preserves Assumption 1: i.e. $(f(Y_i(1)), f(Y_i(0)), G_i)$ are jointly independent of $Z_i$, and Assumptions 2-3 are unaffected by such a transformation to the outcome variable. Thus, we can prove Theorem \ref{thmid} with $f(y)=y$ without loss of generality.

This proof is structured in a way that builds a tight connection to the approach to identification in MTW2, and provides intermediate results that support the extended comparison in Appendix \ref{sec:compare} to that paper. A more direct proof of Theorem \ref{thmid} following the intuition described in Section \ref{sec:intuition} can be found in the Online Appendix. The proof below also combines Theorem \ref{thmid} with its generalization Theorem 1* from Appendix \ref{alternative} (which relaxes Assumption 3), clarifying places where the distinction between Assumption 3 and the weaker Assumption 3* is important.

Begin by observing that moments of the form $b(d)_\mathbf{z} = \mathbbm{E}[Y_i\mathbbm{1}(D_i=d)|Z_i= \mathbf{z}]$ imply a system of linear equations that must be satisfied by latent quantities of the form $x(d)_g = P(G_i=g)\mathbbm{E}[Y_i(d)|G_i=g]$. In particular, by the law of iterated expectations and Assumption 1:
$$ b(d)_\mathbf{z} = \sum_{g \in \mathcal{G}} P(G_i=g)\cdot\mathbbm{E}[Y_i\mathbbm{1}(D_g(\mathbf{z})=d)|G_i=g] = \sum_{g \in \mathcal{G}} \mathbbm{1}(D_g(\mathbf{z})=d)\cdot x(d)_g$$
for each $d \in \{0,1\}$ and all $\mathbf{z} \in \mathbb{S}_Z$, where $\mathbb{S}_Z:=\{\mathbf{z} \in \mathcal{Z}: P(Z_i=\mathbf{z})>0\}$ is the support of $Z_i$ (recall that under Assumption 3: $\mathbb{S}_Z=\mathcal{Z} = \{0,1\}^J$). Let $\mathbf{x}(d)$ be a $|\mathcal{G}| \times 1$ vector with elements $x(d)_g$ for all $g \in \mathcal{G}$, and $\mathbf{b}(d)$ a $|\mathbb{S}_Z| \times 1 $ vector with elements $b(d)_\mathbf{z}$ for all $\mathbf{z} \in \mathbb{S}_Z$. Let $A(d)$ be a $|\mathbb{S}_Z| \times |\mathcal{G}|$ matrix of entries of the form $[A(d)]_{\mathbf{z},g}=\mathbbm{1}(D_g(\mathbf{z})=d)$. We now have a system of linear restrictions $A(d)\mathbf{x}(d)=\mathbf{b}(d)$ for each $d \in \{0,1\}$, which we can combine to write $A\mathbf{x}=\mathbf{b}$, where $\mathbf{x}:=(\mathbf{x}(0)',\mathbf{x}(1)')'$, $\mathbf{b}:=(\mathbf{b}(0)',\mathbf{b}(1)')'$, and $A$ is a block diagonal $2|\mathbb{S}_Z| \times 2|\mathcal{G}|$ matrix composed of $A(0)$ and $A(1)$.\footnote{In the discussion preceding Theorem \ref{thm:equiv}, the entries of $\mathbf{x}$ and $\mathbf{b}$ are denoted in the alternative notation $x_{dg}$ and $b_{d\mathbf{z}}$, for brevity there. Theorem \ref{thm:equiv} also uses the notation $A^{VM}$ for $A$ in the special case that the instruments have full support.}

Since the matrix $A$ has more rows than columns under VM (because $|\mathcal{G}|>|\mathcal{Z}| \ge |\mathbb{S}_Z|$), we cannot hope to invert the system $A\mathbf{x}=\mathbf{b}$ to solve for a unique value of $\mathbf{x}$. Point identification instead relies on the parameter of interest taking on the same value for all $\mathbf{x}$ that are compatible with the system of linear equations. A standard result characterizing the solutions to linear systems (see e.g. \citealt{ben2003generalized}) says that the set of vectors $\mathbf{x}$ compatible with $A\mathbf{x}=\mathbf{b}$ can be written as $\{A^+ \mathbf{b} + (I-A^+ A)\mathbf{w}\}$ across all vectors $\mathbf{w} \in \mathbbm{R}^{2|\mathcal{G}|}$, where $A^+$ is the Moore-Penrose pseudo-inverse of $A$.

Conveniently, we can also write a causal parameter of the form $\mu^{d}_c$ as a linear function of $\mathbf{x}$ (generalizing Equation \ref{deltaw}). By the law of iterated expectations over $g$ and Assumption 1, $\mu^{d}_c = \mathbf{\theta}'\mathbf{x}(d)$, where $\mathbf{\theta}$ is a $|\mathcal{G}| \times 1$ vector with elements $\theta_g = \frac{\mathbbm{E}[c(g,Z_i)]}{\mathbbm{E}[c(G_i,Z_i)]}$. To establish a common notation that also covers conditional average treatment effects, let $\Delta_{\alpha,c}:=\alpha_1\cdot \mathbbm{E}[Y_i(1)|C_i=1] + \alpha_0\cdot \mathbbm{E}[Y_i(0)|C_i=1]$ for any $\alpha_0$ and $\alpha_1$ in $\mathbbm{R}$. We can write any such $\Delta_{\alpha,c}$ as $\mathbf{\theta_\alpha}'\mathbf{x}$, where $\mathbf{\theta_\alpha} = (\alpha_0 \cdot \mathbf{\theta}',\alpha_1 \cdot \mathbf{\theta}')'$. This notation allows us to simultaneously nest as special cases i) treatment effects $\Delta_c$ when $\alpha_0 = -1, \alpha_1 = 1$; ii) untreated counterfactual means $\mu^{0}_c$ when $\alpha_0=1, \alpha_1=0$; and similarly iii) treated counterfactual means $\mu^{1}_c$ when $\alpha_0=0, \alpha_1=1$. 

Given the above, the set of values for $\Delta_{\alpha,c}$ that are compatible with the system $A\mathbf{x}=\mathbf{b}$ is
\begin{equation*} \label{eq:BCalpha}
	B_{c,\alpha}:=\{\mathbf{\theta_\alpha}'A^+ \mathbf{b} + \mathbf{\theta_\alpha}'(I-A^+ A)\mathbf{w}\}
\end{equation*}
across $\mathbf{w} \in \mathbbm{R}^{2|\mathcal{G}|}$. The set $B_{c,\alpha}$ is a singleton when $\mathbf{\theta_\alpha}'(I-A^+ A)\mathbf{w}=0$ for all $\mathbf{w}$, which occurs if and only if $\theta_\alpha'$ belongs to the row-space of the matrix $A$ (in this case $\mathbf{\theta_\alpha}'A^+ A=\mathbf{\theta_\alpha}'$). $\Delta_{\alpha,c}$ is then identified provided that $\mathbf{\theta_\alpha}'A^+\mathbf{b}$ is.

Let $D$ be a $|\mathbb{S}_Z| \times |\mathcal{F}|$ matrix with entries $D_{\mathbf{z},S} = \mathbbm{1}(S \subseteq \mathbf{z}_1)$, where $\mathcal{F}$ is a family of subsets of $\{1 \dots J\}$ satisfying Assumption 3* (when the stronger Assumption 3 holds we let $\mathcal{F} = 2^{\{1\dots J\}}$, the full powerset of $\{1 \dots J\}$). Here and in the subsequent proofs we use the notation of Footnote \ref{fn:equivalently}, that $\Gamma_i= (Z_{S_1i} \dots, Z_{S_ki})'$ for some arbitrary ordering of the $k:=|\mathcal{F}|-1$ non-empty subsets $S \in \mathcal{F}$, where $Z_{Si} := \prod_{j \in S} Z_{ji}$. Similarly denote the components of $\lambda$ as $\lambda_S$ for $S \in \mathcal{F}, S \neq \emptyset$ (rather than the equivalent notation $\lambda_g$ for $g$ across $\mathcal{G}^s$ used in the main text).

Let $\mathbf{d}$ be a $|\mathbb{S}_Z|$-vector with elements $d_\mathbf{z} = \mathcal{P}(\mathbf{z})$, and $\mathbf{\tilde{\lambda}} := (0,\mathbf{\lambda}')'$. Theorem \ref{thmid} follows from the following Proposition: 
\begin{proposition} \label{prop:rowspace} 
	If Assumptions 2 and 3* hold, then $\theta_\alpha'$ belongs to the row-space of the matrix $A$ for any $\alpha_0,\alpha_1$. As a result, $\Delta_{c,\alpha}=\mathbf{\theta_\alpha}'A^+ \mathbf{b}=\frac{\mathbf{\tilde{\lambda}}'D^+ \left\{\alpha_1\mathbf{b}(1)-\alpha_0\mathbf{b}(0)\right\}}{\mathbf{\tilde{\lambda}}'D^+ \mathbf{d}}$ and $P(C_i=1)=\mathbf{\tilde{\lambda}}'D^+\mathbf{d}$ provided that $P(C_i=1)>0$ and Assumption 1 holds. 
\end{proposition}

\noindent Letting $\alpha_d=0$ and $\alpha_{1-d}=0$ for either $d \in \{0,1\}$, the Proposition yields identification of $\mu^{d}_c$ as $(-1)^{d+1}\frac{\mathbf{\tilde{\lambda}}'D^+ \mathbf{b}(d)}{\mathbf{\tilde{\lambda}}'D^+ \mathbf{d}}$.

To obtain the form $\Delta_{c,\alpha}=\frac{\mathbf{\tilde{\lambda}}'D^+ \left\{\alpha_1\mathbf{b}(1)-\alpha_0\mathbf{b}(0)\right\}}{\mathbf{\tilde{\lambda}}'D^+ \mathbf{d}}$ written in Theorem \ref{thmid}, let $\mathfrak{Z}_i$ be a $|\mathbb{S}_Z| \times 1$ vector of indicators $\mathbbm{1}(Z_i= \mathbf{z})$ for each of the values $\mathbf{z} \in \mathbb{S}_Z$. With probability one:
$[(1,\Gamma_i')']_S = D_{Z_i,S} = \sum_{\mathbf{z} \in \mathbb{S}_Z}  [\mathfrak{Z}_i]_\mathbf{z} \cdot D_{\mathbf{z},S} =[D' \mathfrak{Z}_i]_S$ for any $S \in \mathcal{F}$. Therefore, $\Sigma^*:=\mathbbm{E}[(1,\Gamma_i')'(1,\Gamma_i')] = D'\mathbbm{E}[\mathfrak{Z}_i\mathfrak{Z}_i']D = D' P D$, where $P$ is a diagonal $|\mathbb{S}_Z| \times |\mathbb{S}_Z|$ matrix with entries $P_{\mathbf{z},\mathbf{z}}=P(Z_i= \mathbf{z})$ for each $\mathbf{z} \in \mathbb{S}_Z$. In Appendix \ref{alternative}, the function $h(z)$ from Theorem \ref{thmid} is generalized under Assumption 3* replacing Assumption 3 to take the same form $h(z) = \mathbf{\lambda}'\Sigma^{-1}(\Gamma_i - \mathbbm{E}[\Gamma_i])$, but with the vector $\Gamma$ (from Section \ref{est}) now defined using only the non-empty sets $S$ within $\mathcal{F}$ rather than from the full powerset $2^{\{1 \dots J\}}$. Thus under either Assumption 3 or Assumption 3* $\Sigma:=Var(\Gamma_i)$ is $k \times k$, where $k=|\mathcal{F}|-1$. We can now write $h(Z_i)$ from Theorem \ref{thmid} as 
\begin{align*}
	h(Z_i) &= \mathbf{\lambda}'\Sigma^{-1}(\Gamma_i - \mathbbm{E}[\Gamma_i]) = \mathbf{\tilde{\lambda}}'{\Sigma^*}^{-1}(1,\Gamma_i')' = \mathbf{\tilde{\lambda}}'(D'PD)^{-1} D' \mathfrak{Z}_i =\mathbf{\tilde{\lambda}}'(D'PD)^{-1} D'P P^{-1} \mathfrak{Z}_i 
\end{align*}
where the second equality can be shown by applying the $2 \times 2$ block inversion formula.

\begin{lemma} \label{lemma:dplus}
	Under Assumption 3*, $(D'PD)^{-1}D'P = D^+$
\end{lemma}
\begin{proof}
	When Assumption 3 holds, this follows immediately from the fact that the matrix $D$ is then invertible (as shown in the proof of Proposition \ref{ass3star}). Thus $(D'PD)^{-1}D'P = D^{-1}P^{-1}\cancel{{D'}^{-1}D}'P = D^{-1}$, and $D^+=D^{-1}$ when $D^{-1}$ exists. A proof that the equality $(D'PD)^{-1}D'P = D^+$ holds more generally under Assumption 3* can be found in the Online Appendix. 
\end{proof}

\noindent Using Lemma \ref{lemma:dplus}, we then have that $h(Z_i)=\mathbf{\tilde{\lambda}}'D^+ P^{-1} \mathfrak{Z}_i$, and thus for any random variable $V_i$:
\begin{align*}
	\mathbbm{E}[h(Z_i)V_i] &= \mathbf{\tilde{\lambda}}'D^+  P^{-1}\mathbbm{E}[\mathfrak{Z}_iV_i]
	=\sum_{\mathbf{z} \in \mathbb{S}_Z} [\mathbf{\tilde{\lambda}}'D^+]_\mathbf{z}\cdot  P(Z_i=\mathbf{z})^{-1}\cdot \mathbbm{E}[\mathbbm{1}(Z_i=\mathbf{z})\cdot V_i]\\
	&=\sum_{\mathbf{z} \in \mathbb{S}_Z} [\mathbf{\tilde{\lambda}}'D^+]_\mathbf{z}  \cdot \mathbbm{E}[V_i|Z_i=\mathbf{z}]:=\mathbf{\tilde{\lambda}}'D^+ \{\mathbbm{E}[V_i|Z_i= \mathbf{z}]\}_{\mathbf{z} \in \mathbb{S}_Z},
\end{align*} 
i.e. $\mathbf{\tilde{\lambda}}'D^+$ describes the coefficients in an expansion of $\mathbbm{E}[h(Z_i)V_i]$ into CEFs of $V_i$ across the support of $Z_i$. Applying this to the variables $Y_i \mathbbm{1}(D_i=d)$ and $D_i$, we arrive at
$$ (-1)^{d+1} \frac{\mathbbm{E}[Y_ih(Z_i)\mathbbm{1}(D_i=d)]}{\mathbbm{E}[h(Z_i)D_i]} = (-1)^{d+1}\frac{\mathbf{\tilde{\lambda}}'D^+ \{\mathbbm{E}[Y_i\mathbbm{1}(D_i=d)|Z_i=\mathbf{z}]\}}{\mathbf{\tilde{\lambda}}'D^+ \{\mathbbm{E}[D_i|Z_i=\mathbf{z}]\}} (-1)^{d+1}\frac{\mathbf{\tilde{\lambda}}'D^+ \mathbf{b}(d)}{\mathbf{\tilde{\lambda}}D^+ \mathbf{d}} = \mu^{d}_c, $$
using Proposition \ref{prop:rowspace}. This establishes Theorem 1* of Appendix \ref{alternative} with Theorem \ref{thmid} as a special case. This also establishes the Corollary to Theorem \ref{thmid} in Appendix \ref{withcovariates} by observing that $\mathbbm{E}[Y_i|Z_i=\mathbf{z}] = \mathbf{b}(0)_\mathbf{z}+\mathbf{b}(1)_\mathbf{z}$, and thus $\Delta_c = \frac{\mathbf{\tilde{\lambda}}'\mathcal{A} \{\mathbbm{E}[Y_i|Z_i=\mathbf{z}]\}}{\mathbf{\tilde{\lambda}}'\mathcal{A} \{\mathbbm{E}[D_i|Z_i=\mathbf{z}]\}}$, where $\mathcal{A}$ is defined in the proof of Lemma \ref{lemma:columnspace} below (under Assumption 3, $D^+ = \mathcal{A}$, as shown therein). 

\subsubsection{Proof of Proposition \ref{prop:rowspace}}

First, observe that since $A$ is block-diagonal, $A^+$ is a block diagonal $2|\mathcal{G}| \times 2|\mathcal{Z}|$ matrix composed of $A(0)^+$ and $A(1)^+$ (effectively, we have a separate system $A(d)\mathbf{x}(d)=\mathbf{b}(d)$ for each $d \in \{0,1\}$). We can thus write $A^+ \mathbf{b}$ as $ A(0)^+ \mathbf{b}(0)+A(1)^+ \mathbf{b}(1)$. The following are some basic properties of the pseudo-inverse that will be useful in what follows: if a matrix $A$ has full column-rank (linearly independent columns), then $A^+ = (A'A)^{-1}A'$, and if a square $A$ is invertible $A^+ = A^{-1}$. The pseudo-inverse commutes with transposition, that is ${A'}^{+} = {A^+}'$. If $A=BC$ and $B$ has full-column rank while $C$ has full row-rank, then $A^+ = C^+ B^+$.\footnote{\label{fn:colrow} To show this, note that these conditions imply that $B^+ B = CC^+ = I_n$. Therefore, $C^+ B^+$ satisfies the four defining conditions to be $A^+$ (see e.g. \citealt{ben2003generalized}): i) $AA^+ A = B\cancel{C C^+} \cancel{B^+ B}C = A$; ii) $A^+ A A^+ = C^+ \cancel{B^+ B}\cancel{C C^+} B^+ =A^+$; iii) $A^+ A = C^+ \cancel{B^+ B}C = C^+ C$ is symmetric; and iv) $A A^+ = \cancel{C C^+} \cancel{B B^+} = I$ is symmetric.}

Let $\mathcal{F}$ be a family of subsets of the instruments $\{1 \dots J\}$ that satisfies Assumption 3* from Appendix \ref{alternative}. In the baseline setup in which Assumption 3 holds (full rectangular support), $\mathcal{F} = 2^{\{1 \dots J\}}$, the full powerset of $\{1 \dots J\}$. When $F \subset 2^{\{1 \dots J\}}$, e.g. when the binary instruments lack full rectangular support, we can index the columns of the matrix $M$ introduced in Section \ref{simple} by the members of $\mathcal{F}$ aside from the empty set. This holds without loss of generality because under Assumption 3b* the entries along this column of $M$ would all be equal to zero (the sets $S \notin \mathcal{F}$ do not show up in any $F(g)$ for any $g \in \mathcal{G}$).

Note that since $c(\cdot, \cdot)$ satisfies Property M, $\mathbbm{E}[c(g,Z_i)]= \sum_{S \subseteq \mathcal{F}, S\ne\emptyset} M_{g,S}\cdot \mathbbm{E}[c(g(S),Z_i)] = [M \mathbf{\lambda}]_g$, which can be shown by simply averaging Property M over the distribution of $Z_i$. Let us adopt a notational convention that any vector or matrix with rows indexed by the selection groups $g \in \mathcal{G}$, the first and last rows correspond to always- and never-takers, respectively. By assumption $\mathbbm{E}[c(g,Z_i])]=0$ for each of these two groups under Property M, and we can thus write $\mathbf{\theta} = \frac{1}{\mathbbm{E}[c(G_i,Z_i)]} (0,(M\mathbf{\lambda})',0)'$. 

To represent this in a more compact notation, let $\tilde{\mathbf{\lambda}} = (0,\mathbf{\lambda}')'$ be the $k \times 1$ vector $\mathbf{\lambda}$ prepended with a zero, where $k:=|\mathcal{F}|-1$,  so that $\tilde{\mathbf{\lambda}}$ has a component for each $S \in \mathcal{F}$. Then we can write the following two expressions for $\theta$, both of which will be useful later:
\begin{equation} \label{eq:theta1reps}
	\mathbf{\theta} = \frac{1}{\mathbbm{E}[c(G_i,Z_i)]} \tilde{M}(1)\mathbf{\tilde{\lambda}}=\frac{-1}{\mathbbm{E}[c(G_i,Z_i)]} \tilde{M}(0)\mathbf{\tilde{\lambda}}
\end{equation}
where $\tilde{M}(0)$ and $\tilde{M}(1)$ are the $|\mathcal{G}| \times |\mathcal{F}|$ matrices
$$ \tilde{M}(1) := \begin{pmatrix}
	1 & \underbrace{\mathbf{0}}_{1 \times k}\\
	\underbrace{\mathbf{0}}_{|\mathcal{G}^c| \times 1} & M\\
	0&\underbrace{\mathbf{0}}_{1 \times k}
\end{pmatrix} \quad \quad \quad \quad \quad \quad \quad \tilde{M}(0) := \begin{pmatrix}
	0 & \underbrace{\mathbf{0}}_{1 \times k}\\
	\underbrace{\mathbf{1}}_{|\mathcal{G}^c| \times 1} & -M\\
	1&\underbrace{\mathbf{0}}_{1 \times k}
\end{pmatrix} $$
	and the $\mathbf{0}$'s and $\mathbf{1}$'s are conformable matrices of zeros or ones respectively, as depicted above. A property of the matrices $\tilde{M}(0)$ and $\tilde{M}(1)$ that will be useful is that they both have full column rank:
	\begin{lemma} \label{lemma:fullrank}
		For either $d \in \{0,1\}$, the matrix $\tilde{M}(d)$ has full column rank given Assumption 3*, and thus $\tilde{M}(1)^+ \tilde{M}(1)=\tilde{M}(0)^+ \tilde{M}(0)=I_{|\mathcal{F}|}$, where $I_n$ is the identity matrix in $\mathbbm{R}^n$ 
	\end{lemma}
	\begin{proof}
		Note that for either $d \in \{0,1\}$, the first	column of $\tilde{M}(d)$ is linearly independent from the rest because the other columns all have zero as their first and last entry. What remains to be shown is that the $k$ columns of $M$ are linearly independent from one another. It is sufficient to show that a $k \times k$ sub-matrix of $J$ has full rank. If one takes the $k$ rows of $M$ corresponding to simple Sperner families $g \in \mathcal{G}^s$ with $S(g) \in \mathcal{F}$, then the resulting submatrix of $M$ is the identity $I_k$, which has full rank. Thus $\tilde{M}(d)$ has full column rank.
	\end{proof}
	 
	Meanwhile, given VM (Assumption 2) we can also use the matrices $\tilde{M}(d)$ to write $A(d) = D\tilde{M}(d)'$ for either $d \in \{0,1\}$, where $D$ is the $|\mathbb{S}_Z| \times |\mathcal{F}|$ matrix defined previously with entries $D_{\mathbf{z},S} = D_S(\mathbf{z})$. In words: the selection functions $D_g(\cdot)$---which are represented by the columns $g$ of the matrix $A(d)$---can be generated as a linear expansion in the selection functions $D_S(\cdot)$ for simple compliance groups along with with never-takers (represented by the columns of $D$), using coefficients from the matrix $\tilde{M}(d)$.
	
	This relationship between $A(d)$ and $\tilde{M}(d)$ proves useful in Lemma \ref{lemma:columnspace} below, which establishes that $\mathbf{\theta}'A(d)^+ A(d) = \mathbf{\theta}'$ (i.e. that $\theta$ is the row space of $A(d)$). This implies identification of $\Delta_{c,\alpha}$ since then $\mathbf{\theta_\alpha}'A^+A=\mathbf{\theta_\alpha}'$ and the set $B_{c,\alpha}$ is a singleton. Recall that Assumption 3* is given in Appendix \ref{alternative} and that Assumption 3 from the main text represents a special case of it.
	\begin{lemma} \label{lemma:columnspace}		
		Given Assumptions 2 and 3*: $\mathbf{\theta}'A(d)^+ A(d) = \mathbf{\theta}'$, i.e. $\theta$ is in the row space of $A(d)$, for either $d \in \{0,1\}$.
	\end{lemma}	
\begin{proof}
	 Consider first the baseline case in which the stronger Assumption 3 holds, so that $\mathbb{S}_Z = \mathcal{Z} = 2^{\{1 \dots J\}}$ and $\mathcal{F}$ consists of all subsets of $\{1 \dots J\}$. The proof is more involved when Assumption 3 is relaxed to Assumption 3* with $\mathcal{F} \subset 2^{\{1 \dots J\}}$, and this general case is handled in the Online Appendix.

	When Assumption 3 holds and $\mathcal{F} = 2^{\{1 \dots J\}}$, the matrix $D$ is $2^J \times 2^J$. We begin by showing that this square matrix has an inverse. In particular, define a $2^J \times 2^J$ matrix $\mathcal{A}$ with entries $\mathcal{A}_{S,\mathbf{z}} = \mathbbm{1}(\mathbf{z}_1 \subseteq S)\cdot (-1)^{|S-\mathbf{z}_1|}$ for all $S \subseteq \{1\dots J\}$, where $(\mathbf{z}_1,\mathbf{z}_0)$ is a partition of the indices $j \in \{1 \dots J\}$ that take a value of zero or one in $\mathbf{z}$, respectively.\footnote{Note that this is equivalent to the matrix $\mathcal{A}$ defined in the Corollary to Theorem \ref{thmid} in Appendix \ref{withcovariates}, except that here we label the rows of $\mathcal{A}$ with $S \subseteq \{1 \dots J\}, S \neq \emptyset$ rather than $g \in \mathcal{G}^s$.}
	
	For any two $\mathbf{z}, \mathbf{z}' \in \mathcal{Z}$, we can expand the quantity $\mathbbm{1}(\mathbf{z}' = \mathbf{z})$ out as a polynomial in the instrument indicators as $\mathbbm{1}(\mathbf{z}'= \mathbf{z})
	= \prod_{j \in \mathbf{z}_1} z'_j \prod_{j \in \mathbf{z}_0}(1-z'_j) = \sum_{S \subseteq \mathbf{z}_0} (-1)^{|S|}\cdot \mathbf{z}'_{(\mathbf{z}_1 \cup S)}$. Then $I_{\mathbf{z}',\mathbf{z}} = [D\mathcal{A}]_{\mathbf{z}',\mathbf{z}}$ because
	$$\mathbbm{1}(\mathbf{z}=\mathbf{z}')= \sum_{S \subseteq \mathbf{z}_0} (-1)^{|S|}\cdot z'_{(\mathbf{z}_1 \cup S)} = \sum_{S \subseteq \{1\dots J\}} \left(\mathbbm{1}(\mathbf{z}_1 \subseteq S)\cdot (-1)^{|S-\mathbf{z}_1|}\right)\cdot \mathbf{z}'_{S} = \sum_{S \subseteq \{1\dots J\}} \mathcal{A}_{S,\mathbf{z}} D_{\mathbf{z}',S}$$
	Since $D$ and $\mathcal{A}$ are square, this implies that both are invertible and $D^{-1} = {D}^+ = \mathcal{A}$.
		
	Since $D$ is full rank and $\tilde{M}(d)'$ has full row-rank by Lemma \ref{lemma:columnspace}, we can write $A(d)^+ = {(D\tilde{M}(d))}^+ = \tilde{M}(d)'^+ D^+= \tilde{M}(d)'^+ \mathcal{A}$. This then implies that $A(d)^+ A(d)=\tilde{M}(d)'^+\tilde{M}(d)'$, and hence $\theta$ belongs to the row space of $A(d)$ if and only if $\tilde{M}(d)\tilde{M}(d)^+\mathbf{\theta} = \mathbf{\theta}$ (i.e.  $\mathbf{\theta}$ is in the column space of $\tilde{M}(d)$). That this latter property holds follows immediately from the representation of $\mathbf{\theta}$ from Eq. (\ref{eq:theta1reps}): $\mathbf{\theta} = \frac{(-1)^{d+1}}{\mathbbm{E}[c(G_i,Z_i)]} \tilde{M}(d)\tilde{\mathbf{\lambda}}$. 
\end{proof}\vspace{.25cm}

	\noindent Given Lemma \ref{lemma:columnspace}, we can now establish that $\Delta_{c,\alpha} = \theta_\alpha ' A^+ \mathbf{b}$, since then: $$\mathbf{\theta_\alpha}'(I-A^+ A) = (\alpha_0\cdot \left\{\mathbf{\theta}'-\mathbf{\theta}'A(0)^+ A(0) \right\},\alpha_0\cdot \left\{\mathbf{\theta}'-\mathbf{\theta}'A(1)^+ A(1) \right\})=(\mathbf{0}_{|\mathcal{G}|}',\mathbf{0}_{|\mathcal{G}|}')=\mathbf{0}_{2|\mathcal{G}|}'$$ To simplify $A^+ \mathbf{b}$ and show that it is equivalent to the form given in Proposition \ref{prop:rowspace} note that $\mathbf{\theta}'A(1)^+ \mathbf{b}(1)=\mathbf{\theta}'\tilde{M}(1)'^+ D^+ \mathbf{b}(1)$ and  $\mathbf{\theta}'A(0)^+ \mathbf{b}(0)=\mathbf{\theta}'\tilde{M}(0)'^+ D^+ \mathbf{b}(0)$. Using the first representation of $\theta$ in Eq. (\ref{eq:theta1reps}), we have that:
	$$\mathbf{\theta}'A(1)^+ \mathbf{b}(1)=\frac{1}{\mathbbm{E}[c(G_i,Z_i)]}\mathbf{\tilde{\lambda}}'\cancel{\tilde{M}(1)'\tilde{M}(1)'^+} D^+\mathbf{b}(1)=\frac{1}{\mathbbm{E}[c(G_i,Z_i)]}\cdot\mathbf{\tilde{\lambda}}' D^+\mathbf{b}(1)$$
	where $\tilde{M}(1)'\tilde{M}(1)'^+ = I_{|\mathcal{Z}|}$ by Lemma \ref{lemma:fullrank}. Using the second representation of $\theta$ in Eq. (\ref{eq:theta1reps}): $$\mathbf{\theta}'A(0)^+ \mathbf{b}(0)=\frac{-1}{\mathbbm{E}[c(G_i,Z_i)]}\mathbf{\tilde{\lambda}}'\cancel{\tilde{M}(0)'\tilde{M}(0)'^+} D^+\mathbf{b}(0)=\frac{-1}{\mathbbm{E}[c(G_i,Z_i)]}\cdot \mathbf{\tilde{\lambda}}'D^+\mathbf{b}(0)$$
	using that $\tilde{M}(0)'\tilde{M}(0)'^+ = I_{|\mathcal{Z}|}$ by Lemma \ref{lemma:fullrank}.
	
	If $\mathbbm{E}[c(G_i,Z_i)]$ is known, $\Delta_{c,\alpha}$ is then identified as:
	\begin{equation} \label{eq:pointid}
		\Delta_{c,\alpha} = \theta_\alpha ' A^+ \mathbf{b} = \frac{1}{\mathbbm{E}[c(G_i,Z_i)]}\cdot \mathbf{\tilde{\lambda}}'D^+ \left\{\alpha_1\mathbf{b}(1)-\alpha_0\mathbf{b}(0)\right\}
	\end{equation}
	
	\noindent It only remains to be shown that $\mathbbm{E}[c(G_i,Z_i)]$ is also identified and equal to $\mathbf{\tilde{\lambda}}'D^+ \mathbf{d}$. Since our derivation of (\ref{eq:pointid}) has made no assumptions about the joint of $(Y_i(0),Y_i(1))$, we can consider the special case in which $Y_i(d)=d$ so that $Y_i=D_i$ with probability one, and $\Delta_c = 1$. Applying (\ref{eq:pointid}) to this setting, we have $1 = \frac{1}{\mathbbm{E}[c(G_i,Z_i)]}\cdot \mathbf{\tilde{\lambda}}'D^+ \mathbf{d}$, and hence $\mathbbm{E}[c(G_i,Z_i)]=\mathbf{\tilde{\lambda}}'D^+ \mathbf{d}$, where note that the RHS of this equality depends only on the joint distribution of $Z_i$ and $D_i$.
	
\subsection{An Equivalence Result for Identification}
	
	The proofs of Theorem \ref{necctheorem} and the discussion in Appendix \ref{sec:compare} make use of the following equivalence result. This result uses the definition of identification given in Footnote \ref{fn:identified}, which has the following useful property: if one set $\mathcal{S}$ of empirical estimands can be written as a known function of another set of empirical estimands $\mathcal{S}'$, then a parameter of interest being point identified by $\mathcal{S}$ implies that this same parameter is also point identified by $\mathcal{S}'$.
	
	\begin{proposition} \label{expectid}
		Let the support $\mathcal{Z}$ of the instruments be finite and Assumption 1 hold. Fix a function $c(g,\mathbf{z})$. Let $\mathcal{P}_{DZ}$ denote the joint distribution of $D_i$ and $Z_i$. Then the following are equivalent:
		\begin{enumerate}
			\item{$\Delta_c$ is (point) identified by $\mathcal{P}_{DZ}$ and a finite set of IV-like estimands $\beta_s := \mathbbm{E}[s(D_i,Z_i)Y_i]$, where each function $s(d,\mathbf{z})$ is known or identified from $\mathcal{P}_{DZ}$}
			\item{$\Delta_c=\beta_s$ for a single such $s$}
			\item{$\Delta_c$ is identified from $\mathcal{P}_{DZ}$ and the set of CEFs $\{\mathbbm{E}[Y_i|D_i=d,Z_i= \mathbf{z}]\}_{d \in \{0,1\},\mathbf{z} \in \mathcal{Z}}$}
			\item $\Delta_c$ is (point) identified by $\mathcal{P}_{DZ}$ and the set of IV-like estimands corresponding to the functions in $\bar{\mathcal{S}}:=\left\{s_{d',\mathbf{z}'}\right\}_{d' \in \{0,1\},\mathbf{z}' \in \mathcal{Z}}$, where $s_{d',\mathbf{z}'}(d,\mathbf{z}) = \mathbbm{1}(d=d',\mathbf{z}=\mathbf{z}')$ 
		\end{enumerate}
	\end{proposition}
	\begin{proof}		
	Let $\mathcal{S}$ denote a set of measurable functions $s(d,\mathbf{z})$ defining IV-like estimands $\{\beta_s\}_{s \in \mathcal{S}}$. We can show each of the following implications:
	\begin{itemize}
		\item $\mathbf{2 \rightarrow 1}$ Immediate, since \textbf{2} is a special case of \textbf{1} with $\mathcal{S}$ a singleton	
		\item $\mathbf{3 \rightarrow 1}$ Let $\mathcal{S} = \{s_{d,\mathbf{z}}\}_{d \in \{0,1\}, z \in \mathcal{Z}}$, where $s_{d,\mathbf{z}}(d',\mathbf{z}')=\mathbbm{1}(d'=d, \mathbf{z}' =\mathbf{z})$. Then each $\beta_s$ is equal to $P(D_i=d, Z_i= \mathbf{z})\cdot\mathbbm{E}[Y_i|D_i=d,Z_i= \mathbf{z}]$ for some $d,\mathbf{z}$, where $P(D_i=d, Z_i= \mathbf{z})$ is known from $\mathcal{P}_{DZ}$.
		\item $\mathbf{1 \rightarrow 3}$ Any $\beta_s$ can be written: $\beta_s = \sum_{d,\mathbf{z}} P(D_i=d, Z_i= \mathbf{z}) s(d,\mathbf{z}) \mathbbm{E}[Y_i|D_i=d, Z_i= \mathbf{z}]$, and is thus pinned down by the CEFs $\mathbbm{E}[Y_i|D_i=d, Z_i= \mathbf{z}]$, the joint distribution $\mathcal{P}_{DZ}$, and the known function $s$.
		\item $\mathbf{4 \rightarrow 1}$ Immediate, since $\{\beta_{s_{d',\mathbf{z}'}}\}_{d' \in \{0,1\},\mathbf{z}' \in \mathcal{Z}}$ is a finite set of IV-like estimands. 
		\item $\mathbf{3 \rightarrow 4}$ from the proof of $\mathbf{3 \rightarrow 1}$, we saw that each $\beta_{s_{d,\mathbf{z}}}=P(D_i=d, Z_i= \mathbf{z}) \cdot \mathbbm{E}[Y_i|D_i=d,Z_i= \mathbf{z}]$, and the denominator is known from $\mathcal{P}_{DZ}$.
		\item $\mathbf{3 \rightarrow 2}$ Note that given knowledge of $\mathcal{P}_{DZ}$, knowing the set of CEFs $\{\mathbbm{E}[Y_i|D_i=d,Z_i= \mathbf{z}]\}_{d \in \{0,1\},\mathbf{z} \in \mathcal{Z}}$ is equivalent to knowing the vector $\mathbf{b}$ having components $\mathbbm{E}[Y_i\mathbbm{1}(D_i=d)|Z_i=\mathbf{z}]$, following the notation in the proof of Theorem \ref{thmid}. As shown there, the set of values of $\Delta_c$ compatible with the outcome CEFs can then be written as, using Assumption 1: $\{\mathbf{\theta_c}'A^+ \mathbf{b} + \mathbf{\theta_c}'(I-A^+ A)\mathbf{w}\}_{\mathbf{w} \in \mathbbm{R}^{2|\mathcal{G}|}}$ where $\mathbf{\theta_c}'=(-\mathbf{\theta}',\mathbf{\theta}')$. If Assumptions 1-3 hold, then $A$ is the $2^J \times Ded_J$ matrix $A^{VM}$, in the notation of Theorem \ref{thm:equiv}.
		
		This set must be a singleton for $\Delta_c$ to be identified absent additional restrictions, since otherwise an infinite collection of values of $\Delta_c$ would be compatible with the full set of restrictions $A\mathbf{x}=\mathbf{b}$ placed on $\mathbf{x}$ by the outcome CEFs (given that $\mathbf{\theta_c}$ is not the zero vector). For this set to be a singleton for a given $A$, the vector $\mathbf{\theta_c}$ must lie in the row space of the matrix $A$, so that $\mathbf{\theta_c}'(I-A^+ A)$ is equal to the zero vector.
		
		Thus, by \textbf{3.}, we have that $\Delta_c = \mathbf{\theta_c}'A^+ \mathbf{b}$. But this implies \textbf{2.}, if we take $s(d,\mathbf{z}) = \frac{P(D_i=d|Z_i= \mathbf{z})}{P(D_i=d,Z_i= \mathbf{z})}\cdot [\mathbf{\theta_c}'A^+]_{(d,\mathbf{z})}=\frac{1}{P(Z_i= \mathbf{z})}\cdot [\mathbf{\theta_c}'A^+]_{(d,\mathbf{z})}$, where $[\mathbf{\theta_c}'A^+]_{(d,\mathbf{z})} = [(-1)^{d+1}\cdot \mathbf{\theta}'A(d)^+]_{\mathbf{z}} $ is the component of the vector $\mathbf{\theta_c}'A^+$ corresponding to the pair $(d,\mathbf{z})$. Note that $A^+$ is a known matrix given $\mathcal{G}$, and $\mathbf{\theta_c}$ is a known function of the marginal distribution of $Z_i$, up to the factor $\mathbbm{E}[c(G_i,Z_i)]$. It only remains to be shown that $\mathbbm{E}[c(G_i,Z_i)]$ is also identified under assumption of \textbf{1}. As in the proof of Theorem \ref{thmid}, take the case in which $Y_i(d)=d$ with probability one. Using the result above and that $\mathbf{\theta} = \frac{(-1)^{d+1}}{\mathbbm{E}[c(G_i,Z_i)]} \tilde{M}(d)\tilde{\mathbf{\lambda}}$: $\mathbbm{E}[c(G_i,Z_i)] = \mathbbm{E}[\tilde{s}(Z_i)D_i]$, where $\tilde{s}(\mathbf{z}) := \frac{1}{P(Z_i= \mathbf{z})}\cdot [\tilde{\mathbf{\lambda}}'\tilde{M}(d)'A(1)^+]_{\mathbf{z}}$.
	\end{itemize}
	\end{proof}
	
	\subsection{Proof of Theorem \ref{necctheorem}}
	
	Note that if $\mu^{d}_c$ is identified for all $d$ and measurable function $f$, so must $\Delta_c = \mu^{1}_c-\mu^{0}_c$ with $f(y)=y$. By Proposition \ref{expectid} above, it follows that if $\Delta_c$ is identified from a finite set of IV-like estimands and $\mathcal{P}_{DZ}$,  then it can be written as a single one: $\Delta_c = \beta_s$ with $s(d,\mathbf{z})$ an identified functional of ${P_{DZ}}$. Write $Y_i = Y_i(0)+D_i\Delta_i$ where $\Delta_i:=Y_i(1)-Y_i(0)$. Then, using the law of iterated expectations:
	\begin{align*}
		\Delta_c &= \beta_s 
		=\sum_{g} P(G_i=g)\left\{\mathbbm{E}[s(D_g(Z_i),Z_i)Y_i(0)|G_i=g]+\mathbbm{E}[s(D_g(Z_i),Z_i)D_g(Z_i)\Delta_i|G_i=g]\right\}\\
		&=\sum_{g} P(G_i=g)\left(\cancel{\mathbbm{E}[s(D_g(Z_i),Z_i)]}\right)\mathbbm{E}[Y_i(0)|G_i=g]\\
		&\hspace{2in}+\sum_{g} P(G_i=g)\left(\mathbbm{E}[s(1,Z_i)D_g(Z_i)]\right)\mathbbm{E}[\Delta_i|G_i=g]
	\end{align*}
	\noindent where I've used independence and that $s(D_g(Z_i),Z_i)D_g(Z_i)=s(1,Z_i)D_g(Z_i)$ for all $i$ in the third equality. The crossed out term must be equal to zero, because we've assumed $\beta_s=\Delta_c$ holds for \textit{every} joint distribution of response groups and potential outcomes compatible with the maintained model assumptions (that $\mathbbm{E}[s(D_g(Z_i),Z_i)]=0$ can also be verified directly in the case covered by Theorem \ref{thmid}). Consider two such distributions, identical except that the second distribution describes a case in which for all units $Y_i(0) \rightarrow Y_i(0)+\delta$ and $Y_i(1)\rightarrow Y_i(1)+\delta$. Then the $\Delta_i$ and hence $\Delta_c$ are unchanged, but if the crossed out term were not zero, $\beta_s$ would change as $\mathbbm{E}[Y_i(0)|G_i=g] \rightarrow \mathbbm{E}[Y_i(0)|G_i=g] + \delta$. Thus $\Delta_c=\sum_{g} P(G_i=g)\left(\mathbbm{E}[s(1,Z_i)D_g(Z_i)]\right)\Delta_g$. 
	
	Recall that from Equation (\ref{deltaw}) that $\Delta_c$ can also be written as a weighted average of group-specific average treatment effects $\Delta_g=\mathbbm{E}[Y_i(1)-Y_i(0)|G_i=g]$ as: $ \Delta_c = \frac{1}{P(C_i=1)} \sum_g P(G_i=g)\cdot \mathbbm{E}[c(g,Z_i)] \cdot \Delta_g$. Since $\beta_s=\Delta_c$ holds for any vector of $\{\Delta_g\}_{g \in \mathcal{G}^c}$, we can match coefficients within each group to establish that $\mathbbm{E}[c(g,Z_i)] =P(C_i=1)\mathbbm{E}[s(1,Z_i)D_g(Z_i)]$. This set of weights satisfies Property M, since:
	\begin{align*}
		\mathbbm{E}[c(g,Z_i)] &= P(C_i=1)\cdot \mathbbm{E}\left[s(1,Z_i)\sum_{g' \in \mathcal{G}^s}M_{gg'}D_{g'}(Z_i)\right]\\
		& = \sum_{g' \in \mathcal{G}^s}M_{gg'}\left( P(C_i=1)\cdot \mathbbm{E}[s(1,Z_i)D_{g'}(Z_i)]\right)= \sum_{g' \in \mathcal{G}^s}M_{gg'}\cdot \mathbbm{E}[c(g',Z_i)]
	\end{align*}
	for any $g \in \mathcal{G}^c$. If this holds for any distribution of $Z_i$ satisfying Assumption 3, then we must have $c(g,\mathbf{z})=\sum_{g' \in \mathcal{G}^s}M_{gg'}\cdot c(g',\mathbf{z})$ for all $\mathbf{z} \in \mathcal{Z}, g \in \mathcal{G}^c$. To see this, consider a sequence of distributions for $Z_i$ that converges point-wise to a degenerate distribution at any single point $\mathbf{z}$, but satisfies Assumption 3 for each term in the sequence. Applying the dominated convergence theorem to $\mathbbm{E}[c(g,Z_i)]-\sum_{g' \in \mathcal{G}^s}M_{gg'}\cdot \mathbbm{E}[c(g',Z_i)]=0$ along this sequence, we have that $c(g,\mathbf{z}) - \sum_{g' \in \mathcal{G}^s}M_{gg'} \cdot c(g',\mathbf{z})=0$. A similar argument establishes that $c(a.t.,\mathbf{z})=c(n.t.,\mathbf{z})=0$ for all $\mathbf{z} \in \mathcal{Z}$ given that $\mathbbm{E}[c(g,Z_i)] =P(C_i=1)\cdot \mathbbm{E}[s(1,Z_i)D_g(Z_i)]$ and $\mathbbm{E}[s(1,Z_i)]=0$.
\end{appendices}

\end{document}